% autosam.tex
% Annotated sample file for the preparation of LaTeX files
% for the final versions of papers submitted to or accepted for 
% publication in AUTOMATICA.

% See also the Information for Authors.

% Make sure that the zip file that you send contains all the 
% files, including the files for the figures and the bib file.

% Output produced with the elsart style file does not imitate the
% AUTOMATICA style. The style file is generic for all Elsevier
% journals and the output is laid out for easy copy editing. The
% final document is produced from the source file in the
% AUTOMATICA style at Elsevier.

% You may use the style file autart.cls to obtain a two-column 
% document (see below) that more or less imitates the printed 
% Automatica style. This may helpful to improve the formatting 
% of the equations, tables and figures, and also serves to check 
% whether the paper satisfies the length requirements.

% Please note: Authors must not create their own macros.

% For further information regarding the preparation of LaTeX files 
% for Elsevier, please refer to the "Full Instructions to Authors" 
% from Elsevier's anonymous ftp server on ftp.elsevier.nl in the
% directory pub/styles, or from the internet (CTAN sites) on
% ftp.shsu.edu, ftp.dante.de and ftp.tex.ac.uk in the directory
% tex-archive/macros/latex/contrib/supported/elsevier.

%\documentclass{elsart}               % The use of LaTeX2e is preferred.

\documentclass[twocolumn]{autart}    % Enable this line and disable the 
                                     % preceding line to obtain a two-column 
                                     % document whose style resembles the
                                     % printed Automatica style.

\usepackage{graphicx}   
\usepackage[all]{xy}
\usepackage{mathrsfs,array}

\usepackage[round]{natbib}
\usepackage{wrapfig}
\usepackage{algorithmic}
\usepackage{amsmath,amssymb,amsfonts}
\usepackage{float}
\usepackage{booktabs}
\usepackage{ntheorem}
\newtheorem*{proof}{Proof}
\newtheorem{theorem}{Theorem}[section]

\newtheorem{lemma}[theorem]{Lemma}

\newtheorem{example}[theorem]{Example}
\newtheorem{definition}[theorem]{Definition}
\newtheorem{remark}[theorem]{Remark}
\makeatletter

\newcommand{\Rmnum}[1]{\expandafter\@slowromancap\romannumeral #1@}
\makeatother
% Include this line if your 
                               % document contains figures,
%\usepackage[dvips]{epsfig}    % or this line, depending on which
                               % you prefer.

\begin{document}

\begin{frontmatter}
%\runtitle{Insert a suggested running title}  % Running title for regular 
                                              % papers but only if the title  
                                              % is over 5 words. Running title 
                                              % is not shown in output.

\title{Controlling the occurrence sequence of reaction modules through biochemical relaxation oscillators\thanksref{footnoteinfo}} % Title, preferably not more 
                                                % than 10 words.

\thanks[footnoteinfo]{This paper is an extension of our earlier paper in the 22nd IFAC World Congress, July 9-14, 2023, Yokohama, Japan. Corresponding author C. Gao. Tel. +86-571-87952431. 
Fax +86-571-87953794.}

\author[China]{Xiaopeng Shi}\ead{xpshi@zju.edu.cn},    % Add the 
\author[China]{Chuanhou Gao}\ead{gaochou@zju.edu.cn},               % e-mail address 
\author[Belgium]{Denis Dochain}\ead{denis.dochain@uclouvain.be}  % (ead) as shown

\address[China]{School of Mathematical Sciences, Zhejiang University, Hangzhou, China}  % Please supply                                                        % full addresses
\address[Belgium]{ICTEAM, UCLouvain, B\^{a}timent Euler, avenue Georges Lema\^{i}tre 4-6, 1348 Louvain-la-Neuve, Belgium}        % here.

\begin{keyword}                           % Five to ten keywords,  
relaxation oscillation, chemical reaction network, module regulation, error analysis               % chosen from the IFAC 
\end{keyword}                             % keyword list or with the 
                                          % help of the Automatica 
                                          % keyword wizard

\begin{abstract}                          % Abstract of not more than 200 words.
Embedding sequential computations in biochemical environments is challenging because the computations are carried out by chemical reactions, which are inherently disordered. In this paper we apply modular design to specific calculations through chemical reactions and provide a design scheme of biochemical oscillator models in order to generate periodical species for the order regulation of these reaction modules. We take the case of arbitrary multi-module regulation into consideration, analyze the main errors in the regulation process under \textit{mass-action kinetics} and demonstrate our design scheme under existing synthetic biochemical oscillator models.
\end{abstract}

\end{frontmatter}

\section{Introduction}
Molecular computation, to be exact, embedding computational instructions into biochemical environments and executing them on behalf of molecules, has always been an important and difficult field in synthetic biology \citep{Vasic2020}. The approach is usually replacing the variables in the calculation instruction with the concentration of specific biomolecules, so that the change of concentrations could achieve a given calculation function according to well designed evolutionary rules. However, the biochemical environment has its complexity, so modular design based on biochemical reactions is not just a one-to-one copy of similar engineering structures and computer instructions \citep{Del2016}. A typical problem is to separate these special computational reaction modules with as little human intervention as possible, so that the otherwise disordered reactions can automatically occur, or just ``execute'' along the given sequence. Unlike the design of cell-like compartments and insertion of electrical clock signals \citep{Blount2017}, our strategy for the so-called module regulation is to construct biochemical oscillators which generate periodical components substituting the electrical signals to control the occurrence sequence of reaction modules.

 The research and design on biochemical oscillators are commonly with two main routes, one is experimental observation based on real biochemical structure and data \citep{McMillen2002, Montagne2011}, and the other is qualitative analysis based on abstract chemical reaction (or just chemical reaction network theory) and dynamical models \citep{Banaji2018, Angeli2021}. Considering that real biochemical experiments often only observe and estimate the existence of oscillation phenomena, while the means based on dynamic modeling and analysis can obtain richer understanding on the oscillation architectures and properties, we take the latter route to explore suitable oscillator models for our control tasks. In our previous work, we designed a relaxation oscillator model based on Chemical reaction network (CRN for short) for two-module regulations \citep{Shi2022} and three-module regulations \citep{Shi2023}. However, the regulation scheme for arbitrary multi-module regulations has not been well given, which is the main point of this paper. We also take the error analysis into consideration, since the low amplitude of our periodical signals can not stay strict $0$ which would lead to an inability to fully ``turn off'' the corresponding modules.

In this paper, we continue to consider the multi-module regulation based on the relaxation oscillation structure, the main contributions are listed below:
\begin{itemize}
    \item we give a design scheme for the regulation of arbitrary multiple modules, which can equally allocate the execution time for every module and therefore has better effects than the one in our conference paper \citep{Shi2023}.
    \item We analyze the errors introduced by the periodic signals.
    \item The strategy of loop termination is embedded in the scheme of sequence regulation, which enables the automatic computation in biochemical environments.
    \item We demonstrate our design scheme under a synthetic genetic relaxation oscillator and the Oregonator model separately given by \cite{McMillen2002} and \cite{Field1972}, showing that our oscillator design provides guidance for module regulation tasks in practical biochemical experiments.
\end{itemize}

This paper is organized as follows. The preliminaries about chemical reaction network is shown in Section \ref{sec:2}. The introduction on regulation mechanism and feasibility analysis of the design scheme for arbitrary multi-module regulation are presented in Section \ref{sec:3}. In Section \ref{sec:4} we give the analysis of the errors induced by the periodical signals and demonstrate the simulation result in a four-module regulation example. Section \ref{sec:5}
is dedicated to the discussion of termination strategy and biochemical application. And the last Section is the conclusion.

\textbf{Notations:} A capital letter (resp., $V$) usually refers to a particular chemical component and the corresponding lowercase letter (resp., $v$) represents its concentration. The sets of positive integers, non-negative integers, real numbers, non-negative real numbers and positive real numbers are denoted by $\mathbb{Z}_{>0}, \mathbb{Z}_{\geq 0}, \mathbb{R}, \mathbb{R}_{\geq 0}$ and $\mathbb{R}_{>0}$, respectively. We use $\mathbb{R}^n$ to denote an $n$-dimensional Euclidean space, and a vector $\boldsymbol{s} \in \mathbb{R}^n_{\geq0}$ (resp., $\boldsymbol{s} \in \mathbb{R}^n_{>0}$) if each element $s_{i} \in \mathbb{R}_{\geq 0}$ (resp., $s_{i} \in \mathbb{R}_{> 0}$) with $i=1,2,...n$.
% OR

%\begin{figure}
%\begin{center}
%\epsfig{file=jcaesar,width=7cm}
%\caption{Gaius Julius Caesar, 100--44 B.C.}
%\label{fig1}
%\end{center}
%\end{figure}

\section{Preliminaries on CRN}\label{sec:2}
A CRN is often expressed as $(\mathcal{S}, \mathcal{C}, \mathcal{R})$, with the three sets representing species set, complex set and reaction set, separately \citep{Feinberg2019}.
To be specific, for a CRN with $n$ species and $m$ reactions, usually the $j$th reaction is described in the following form:
\begin{equation}
R_j:~~ a_{j1}S_{1}+\cdots +a_{jn}S_{n} {\rightarrow}  b_{j1}S_{1}+\cdots +b_{jn}S_{n}.
\end{equation} 
The parameters $a_{ji},~ b_{ji} \in \mathbb{Z}_{\geq 0}$ refer to the stoichiometric constants of species $S_i$ as reactant and product. Then the three sets of a universal CRN can be denoted as $\mathcal{S}=\left \{S_1, S_2, \dots, S_n \right \}$, $\mathcal{C}= \left({\textstyle \bigcup_{j=1}^{m}}\sum_{i=1}^{n} a_{ji}S_i\right)\cup   \left({\textstyle \bigcup_{j=1}^{m}}\sum_{i=1}^{n} b_{ji}S_i\right)$ and $\mathcal{R}=\left \{R_1, R_2, \dots, R_m \right \}$. The corresponding dynamics describing the evolution of species concentrations can be written as 
\begin{equation}\label{dynamics1}
    \dot{\boldsymbol{s}} = \Xi \cdot \boldsymbol{r}\left(\boldsymbol{s}\right)\ ,
\end{equation}
where $\boldsymbol{s}=\boldsymbol{s}(t) \in \mathbb{R}^n_{\geq 0}$ represents the concentration vector of $\mathcal{S}$, matrix $\Xi_{n \times m}$ is the stoichiometric matrix defined by $\Xi_{ji}=b_{ji}-a_{ji}$, and $\boldsymbol{r}(\boldsymbol{s}) \in \mathbb{R}^m$ is the rate vector determined by the specific kinetics assumption. This paper takes the \textit{mass-action kinetics} as the modeling assumption which has the following expression:
\begin{equation}\label{dynamics2}
\boldsymbol{r}(\boldsymbol{s}) = \left ( k_{1}\prod_{i=1}^{n}s_{i}^{a_{1i}},\dots,k_{m}\prod_{i=1}^{n}s_{i}^{a_{mi}}  \right )^{\top}
\end{equation}
with the reaction rate constant of $R_j$ to be $k_j >0$. We often omit the rate constant in the context if $k_j=1$. A CRN equipped with \textit{mass-action kinetics} is called a mass-action chemical system, which can be modeled as a group of polynomial ODEs. It has been proved that mass-action chemical kinetics is Turing universal, which means that any functional computation can be embedded into solutions of a group of polynomial ODEs and then be implemented back into mass-action chemical systems \citep{Fages2017, Salehi2017}. Our description on the reaction modules to be regulated and the design of the oscillator are both based on this.

\section{Mechanism and universal approach for arbitrary multi-module regulations}\label{sec:3}
In this section we provide introduction on our basic oscillator model, and then demonstrate the design scheme for arbitrary multi-module regulations.

\subsection{Regulation mechanism of relaxation oscillator}
We begin with a two-module example to show the mechanism of module regulation.

\begin{example}\label{two:module}
Consider two reaction modules $\mathcal{M}_1$ and $\mathcal{M}_2$, given by
\begin{align*}
\mathcal{M}_1:
        S_{1} &\to S_{1} + S_{2}\ , &\mathcal{M}_2:  S_{2} &\to S_{1} + S_{2}\ , \\
        S_{3} &\to S_{3} + S_{2}\ ,  &   S_{1} &\to \varnothing\ . \\
        S_{2} &\to \varnothing \ ; 
    \end{align*}    
\end{example}
The reaction modules to be regulated can be viewed as CRNs which realize specific computations like the computer instructions based on the equilibria of the network dynamics. Note that species $S_3$ is a catalyst and we treat the equilibrium concentration $s_1^*$ and $s_2^*$ of $S_1$ and $S_2$ respectively as the outputs of the modules, then it is obvious that the two modules achieve instructions like $s_2^*=s_1^*+s_3^*$ and $s_1^*=s_2^*$ based on the ODEs as follows:
\begin{equation}\label{ode:ex1}
\begin{aligned}
\mathcal{M}_1:
\dot{s}_2 &= s_{1} + s_{3} - s_{2}\ , & 
\mathcal{M}_2:\dot{s}_1 &= s_{2} - s_{1}\ , \\
\dot{s}_1 &=\dot{s}_3 = 0\ ;  &  \dot{s}_2 &= 0\ .  
\end{aligned}
\end{equation}
After we set the initial concentration of $S_3$ as $1$, alternative loop of these two modules can realize the instruction as $s_1^*=s_1^*+1$, just like the increment operation. Alternation and loop of these two modules call for suitable periodical clock signals, so our work is to design appropriate oscillator model to generate species acting as these signals.

The approach is based on the relaxation oscillation and truncated subtraction operation, which can be expressed in the following ODEs:
\begin{subequations}\label{basic odes}
    \begin{align}
        \epsilon_{1}\dot{x} &= \eta_{1}f(x,y)   \ , \label{eq1a} \\
         \dot{y} &= \eta_{1}g(x,y)  \ , \label{eq1b} \\
    \epsilon_{1}\epsilon_{2}\dot{u} &= \eta_{1}(\epsilon_{1}(p-u)-uv)  \ , \label{eq1c} \\
    \epsilon_{1}\epsilon_{2}\dot{v} &= \eta_{1}(\epsilon_{1}(x-v)-uv)  \ , \label{eq1d}
    \end{align}
\end{subequations}
with $0 < \epsilon_{1}, \epsilon_{2} \ll 1$ and $\eta_{1}, p >0$. 
The parameter $\epsilon_1$ leads to the relaxation oscillation in the system together with the function $f(x,y)$ and $g(x,y)$, $\epsilon_2$ ensures that $u$ and $v$ oscillate synchronously with the driving element $x$, and $\eta_1$ controls the period of the whole system. We have proved that the low-amplitude segments of $u$ and $v$ can be close enough to $0$ under certain conditions of parameters and initial values, and demonstrated the process of two-module regulation in \citep{Shi2022} based on the detailed ODEs:
\begin{subequations}\label{detailed odes}
    \begin{align}
         \Sigma_{xy}: \epsilon_{1} \dot{x} &=\eta_{1}(\varphi(x)-y)x \ , \label{eq2a} \\
         \dot{y} &=\eta_{1}(x-\ell)y \ , \label{eq2b} \\
         \Sigma_{uv}: \epsilon_{1}\epsilon_{2}\dot{u} &= \eta_{1}(\epsilon_{1}(p-u)-uv) \ , \label{eq2c} \\
    \epsilon_{1}\epsilon_{2}\dot{v} &= \eta_{1}(\epsilon_{1}(x-v)-uv)\ , \label{eq2d}
    \end{align}
\end{subequations}
with $\varphi(x)=-x^3+9x^2-24x+21$. The subsystem $\Sigma_{xy}$ has a cubic-form critical manifold as $\left\{(x,y):y=\varphi(x) \right \}$ which has two fold points $x=2$ and $x=4$ (Referring to Fig. \ref{fig1}). We also command that $2<\ell, p<4$ so that the relaxation oscillation emerges in $\Sigma_{xy}$ and is inherited in a specific form by $\Sigma_{uv}$. This model can tackle the regulation task in Example \ref{two:module} as long as we insert the pair of symmetrical clock signals $U$ and $V$ separately into the $\mathcal{M}_1$ and $\mathcal{M}_2$ as catalysts, achieving the following modules:
\begin{equation*}
\begin{array}{c|c}
\underbrace{\begin{aligned}
S_{1} + U& \to S_{1} + S_{2} + U\ , \\
 S_{3} + U&\to S_{3} + S_{2} + U\ , \\
S_{2} + U&\to U\ ;
\end{aligned}}_{\tilde{\mathcal{M}}_{1}}
&
\underbrace{\begin{aligned}
S_{2} + V&\to S_{1} + S_{2} + V\ , \\
 S_{1} + V&\to V\ .
\end{aligned}}_{\tilde{\mathcal{M}}_{2}}
\end{array}
\end{equation*}
Signals $U$ and $V$ oscillate in an anti-phase form, so the reactions in $\tilde{\mathcal{M}}_{1}$ and $\tilde{\mathcal{M}}_{2}$ will occur in sequence at staggered intervals, even if they are mixed together. However, the design of \eqref{detailed odes} can not directly handle multi-module regulations. We follow the utilization of periodical components as catalysts and provide a further scheme in this paper to generate arbitrary number of symmetrical clock signals.

Before the introduction of our scheme, we need to extend the requirements for a pair of symmetrical clock signals to arbitrary $m$ clock signals in the Definition \ref{def1}. 
\begin{definition}[symmetrical clock signals]\label{def1}
    Species $V_1, V_2, \dots, V_m$ are called symmetrical clock signals ($m \in \mathbb{Z}_{>0}$) if 
    \begin{enumerate}
        \item [\rm 1.] $V_1, V_2, \dots, V_m$ oscillate synchronously with the same period $T$ which can be equally divided into $m$ separate parts.
        \item [\rm 2.] Only the signal $V_j$ has a concentration strictly larger than $0$ (e.g. $v_j>1$) in the $j$th segment of each single period $T$ (i.e. $\frac{T}{m}(j-1)<t<\frac{T}{m}j$), and concentrations of other signals are in a tiny neighborhood of $0$ (i.e. $v_i \ll 1$ when $i \ne j$). 
        \item [\rm 3.] The concentration of each signal switches between the high and low amplitude in the form of abrupt transitions between phases.
    \end{enumerate}
\end{definition}
Note that a single driving component $X$ can only generate a pair of anti-phase signals $U$ and $V$ which can not simultaneously act as two elements in the $m$ clock signals, so we have to reuse the system \eqref{detailed odes} under different $m$ initial conditions and take a separate result of $V_j$ each time as the clock signal ($j=1,2, \dots, m$).
Fig. \ref{fig2} provides an illustration with $m=4$, which is actually the simulation result we will describe in the following four-module regulation situation.

Imaging that there are four modules named $\mathcal{M}_1$, $\mathcal{M}_2$, $\mathcal{M}_3$ and $\mathcal{M}_4$ to be regulated, we have to produce corresponding four symmetrical signals $V_1$, $V_2$, $V_3$ and $V_4$ satisfying our requirements in Definition \ref{def1}. We set $\ell=2.3158$ in order to let the period ($T_1$) corresponding to the low amplitude of $X$ is three times as much as the high one ($T_2$), that is to say,
\begin{equation}\label{ratio}
    \frac{T_1}{T_2} \approx \frac{\int_{1}^{2}\frac{\varphi'(x)dx}{\eta_{1}(x-\ell)\varphi(x)}}{\int_{5}^{4}\frac{\varphi'(x)dx}{\eta_{1}(x-\ell)\varphi(x)}}=3,
\end{equation}
and then mark out four points $P_1$, $P_2$, $P_3$ and $P_4$ on the critical manifold of $\Sigma_{xy}$ whose coordinates correspond to the following formula:
\begin{equation}
    \frac{\int_{x_i}^{2}\frac{\varphi'(x)dx}{\eta_{1}(x-\ell)\varphi(x)}}{\int_{1}^{2}\frac{\varphi'(x)dx}{\eta_{1}(x-\ell)\varphi(x)}}=\frac{i-1}{3}, ~~  y_i=\varphi(x_i), ~~ i=1,2,3,4.
\end{equation}
Fig. \ref{fig1} shows the diagram for the phase plane of the subsystem $\Sigma_{xy}$ and the general positions of the four initial points can be found as $x_1=2, x_2=1.6463, x_3=1.3776, x_4=1$. The oscillation orbit $\Gamma_{\epsilon_1}$ consists of two segments which lie in the $O(\epsilon_1)$-neighborhood of the critical manifold and two horizontal segments. With $\epsilon_{1}=\epsilon_{2}=0.001, \eta_{1}=0.1, p=3$ and $P_i$ with $(u_i,v_i)=(0,0)$ as the initial points, simulation result for $v_i$ is provided in Fig. \ref{fig2}. The rules for $P_i$ shift the initial phase of driving component $x$ and correspondingly stagger the phases of $v_i$, making sure that there is only one $v_i$ strictly larger than $0$ in every segment of the whole period $T$. 

\begin{figure}
\begin{center}
\includegraphics[width=1.0\linewidth,scale=1.00]{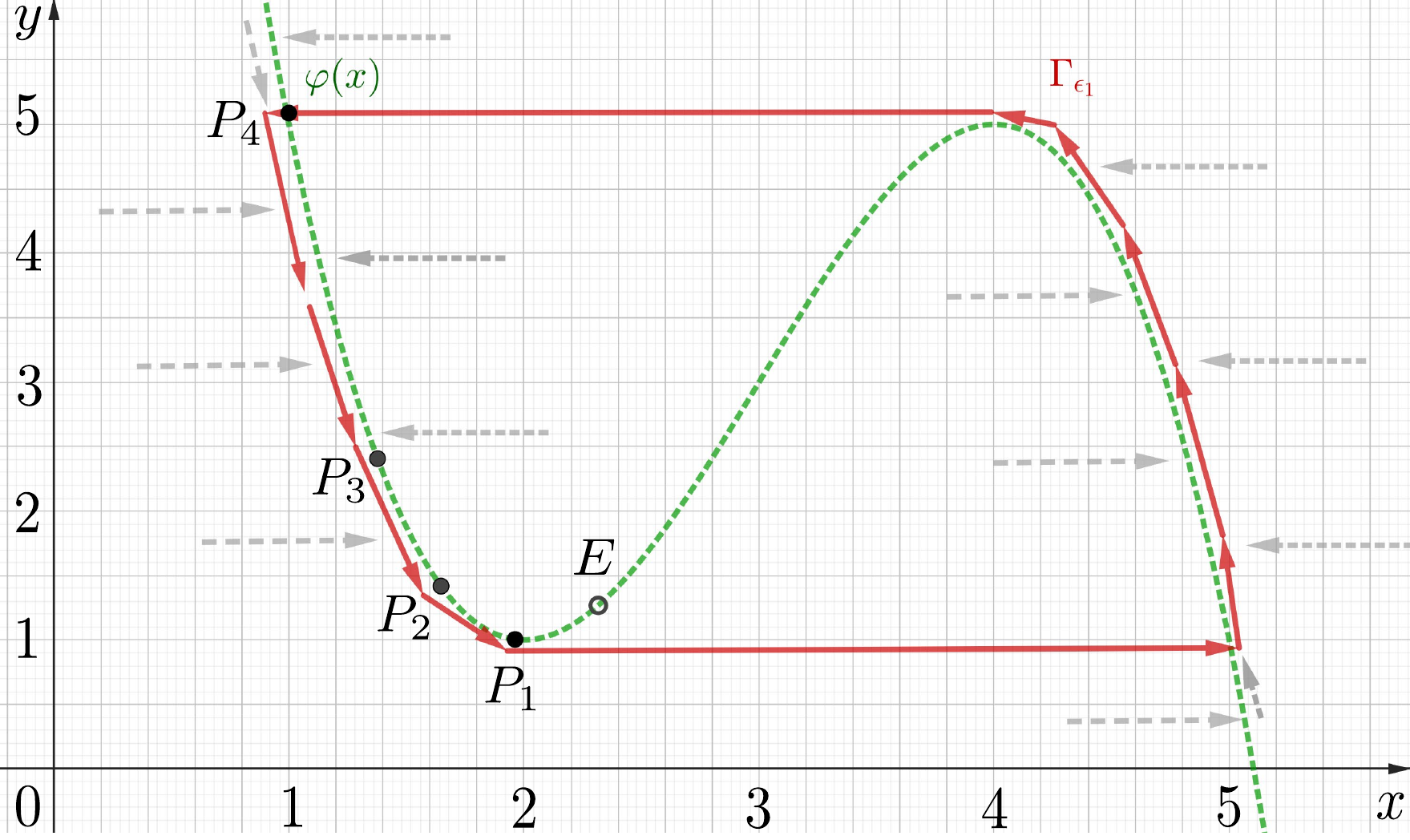}    % The printed column  
\caption{Diagram for the phase plane of $\Sigma_{xy}$. The dotted green curve represents the critical manifold $\left \{ (x,y): y=\varphi(x) \right \}$ and the red lines describe the approximate position of the oscillation orbit $\Gamma_{\epsilon_1}$. $P_1, P_2, P_3$ and $P_4$ are the four initial points, and the point $E(\ell,\varphi(\ell))$ lying on the middle part of critical manifold is the unique unstable equilibrium in the first quadrant.}  % width is 8.4 cm.
\label{fig1}                                 % Size the figures 
\end{center}                                 % accordingly.
\end{figure}

\begin{figure}
\begin{center}
\includegraphics[width=1.0\linewidth,scale=1.00]{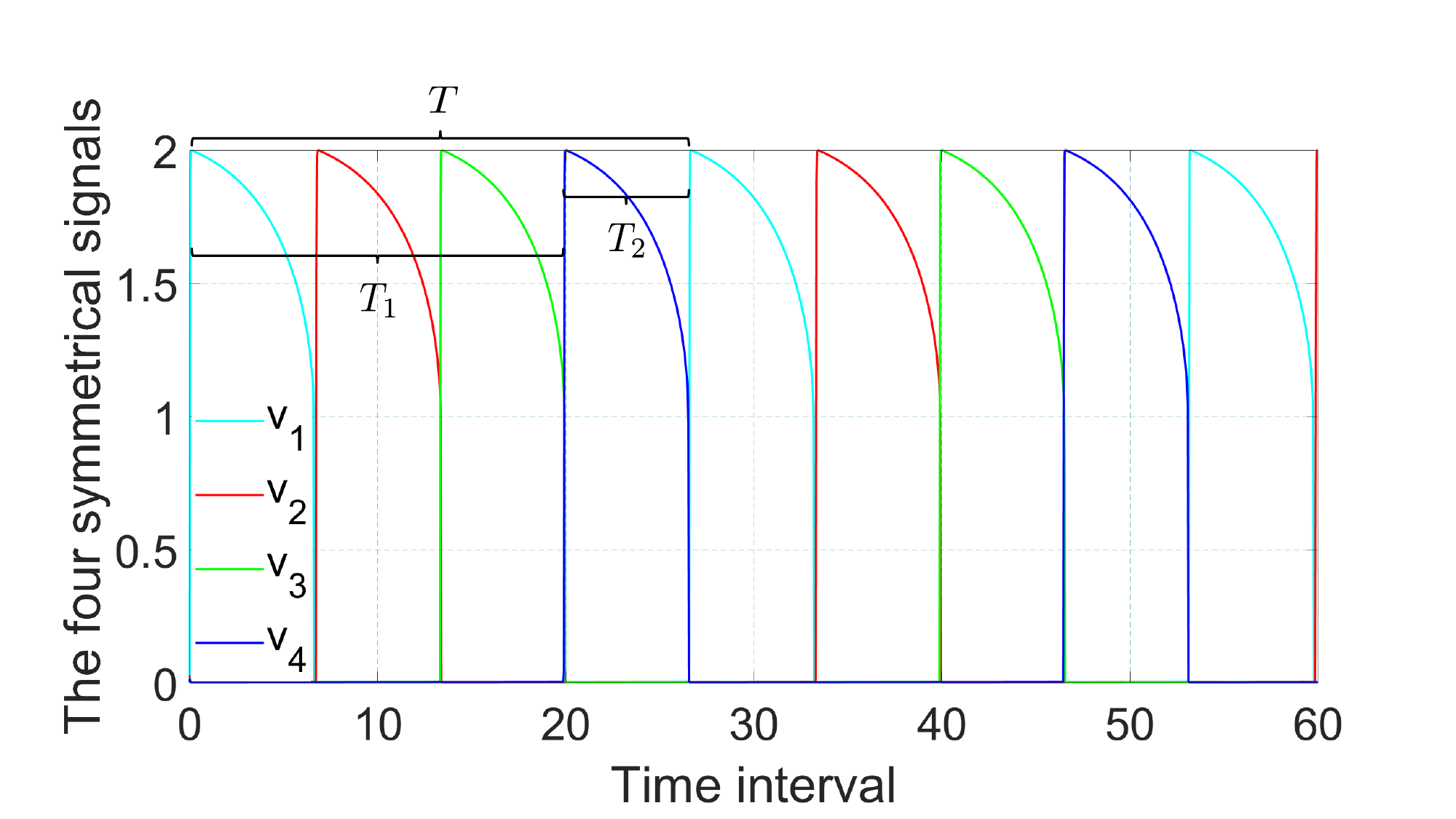}    % The printed column  
\caption{Simulation result for the concentrations of the four signals $V_1$, $V_2$, $V_3$ and $V_4$. The four signals are symmetrical as Definition \ref{def1} with $T \approx 26.874$. }  % width is 8.4 cm.
\label{fig2}                                 % Size the figures 
\end{center}                                 % accordingly.
\end{figure}

It should be noted that the important premise for the construction of the above four signals is the formula \eqref{ratio}, so the ratio of $T_1$ and $T_2$ actually determines the number of modules which can be regulated. For a regulation task of $m$ modules ($m \geq 3$), the corresponding ratio should be $m-1$. In our previous model \eqref{detailed odes}, only the parameter $\ell$ adjusts the ratio, that is $0.256 < T_1/T_2 < 6.987$ within $2< \ell <4$. Therefore, our original model design fails for regulation tasks of more than seven modules. The strategy here is to replace the detailed $\varphi(x)$ with general cubic functions, and the analysis for the parameter selection is provided in the following subsection.

\subsection{Design scheme for arbitrary multi-module regulations}
In order to explore the range of $T_1/T_2$, we extend the expression of $\varphi(x)$ to the general cubic form as 
\begin{equation}\label{eq:cubic}
\varphi(x)=-x^3+bx^2+cx+d,   
\end{equation}
and the ratio will inherit the form in formula \eqref{ratio}. The following lemma gives the brief constraints on coefficient values.
\begin{lemma}
    To ensure the existence of relaxation oscillation and the correspondence to mass-action CRN systems, parameters in (\ref{eq:cubic}) should satisfy the restrictions below:
    \begin{enumerate}
        \item [\rm 1.] $b>0$, $c<0$ and $-3c<b^2<-4c$;
        \item [\rm 2.] $d>d_0=(-b(b-\sqrt{b^2+3c})^2-6c(b-\sqrt{b^2+3c}))/27$;
        \item [\rm 3.] $x_m< \ell <x_M$, where $x_m$ and $x_M$ represent the two roots of the equation $\varphi^{'}(x)=-3x^2+2bx+c=0$.
    \end{enumerate}
\end{lemma}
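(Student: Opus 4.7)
The plan is to treat the three conditions as the three geometric requirements a cubic slow manifold must satisfy in order for (i) $\varphi$ to have the "S-shape" enabling relaxation oscillation, and (ii) the oscillation orbit produced by $\Sigma_{xy}$ to remain in the open first quadrant $\mathbb{R}_{>0}^{2}$, which is the mandatory invariant region for any mass-action system. I would dispatch each condition separately and then observe that they are jointly sufficient, appealing at the very end to geometric singular perturbation theory (already invoked implicitly in the discussion following \eqref{detailed odes}) to upgrade these static geometric features into the existence of a genuine relaxation limit cycle.

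First I would analyze $\varphi'(x)=-3x^{2}+2bx+c$. This downward parabola has two real roots iff its discriminant $4b^{2}+12c$ is positive, giving $b^{2}>-3c$; the two roots are then $x_{m,M}=(b\mp\sqrt{b^{2}+3c})/3$, with product $-c/3$ and sum $2b/3$ by Vieta. Requiring both fold points to be \emph{strictly positive} (so that the S-shape sits in the first quadrant) is equivalent to $c<0$ and $b>0$, recovering the first half of condition~1. The upper bound $b^{2}<-4c$ is the delicate ingredient, and is where I expect the main obstacle to lie: it comes not from the critical manifold itself but from the horizontal fast-jump portion of the orbit $\Gamma_{\epsilon_{1}}$. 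Starting at the local maximum $(x_{M},\varphi(x_{M}))$, the jump lands at the other solution of $\varphi(x)=\varphi(x_{M})$; since $x_{M}$ is a double root of this cubic equation, Vieta identifies the third root as $x_{*}=b-2x_{M}$. Enforcing $x_{*}>0$ gives $b>2x_{M}=(2b+2\sqrt{b^{2}+3c})/3$, which rearranges to exactly $b^{2}<-4c$. Thus condition~1 is precisely what guarantees that both fold points and both endpoints of the horizontal jumps are positive.

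For condition~2 I would compute $\varphi(x_{m})$ explicitly. Using $-3x_{m}^{2}+2bx_{m}+c=0$ to eliminate $x_{m}^{3}$ yields
\[
\varphi(x_{m})=\tfrac{1}{3}\bigl(bx_{m}^{2}+2cx_{m}\bigr)+d,
\]
and substituting $x_{m}=(b-\sqrt{b^{2}+3c})/3$ gives the clean identity
\[
\varphi(x_{m})=d+\tfrac{1}{27}\bigl(b(b-\sqrt{b^{2}+3c})^{2}+6c(b-\sqrt{b^{2}+3c})\bigr),
\]
so $\varphi(x_{m})>0$ is equivalent to $d>d_{0}$. Combined with $b^{2}<-4c$, this forces the entire orbit $\Gamma_{\epsilon_{1}}$ to remain in $\mathbb{R}_{>0}^{2}$, which is exactly the positivity of trajectories one needs for a mass-action realization: every monomial of $(\varphi(x)-y)x$ and $(x-\ell)y$ appearing in \eqref{eq2a}--\eqref{eq2b} carries an $x$ or $y$ factor, so non-negativity of the orthant is preserved, and the orbit staying strictly positive certifies that no reaction stalls or leaves the reachable chemical state space.

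Finally, for condition~3 I would observe that the unique nontrivial equilibrium of $\Sigma_{xy}$ in the interior is $E=(\ell,\varphi(\ell))$, which lies on the middle branch of the critical manifold precisely when $x_{m}<\ell<x_{M}$. On this branch the layer problem has a positive eigenvalue, so the slow manifold is normally repelling; standard Fenichel/entry-exit arguments for planar slow-fast systems then produce a unique attracting relaxation limit cycle for $0<\epsilon_{1}\ll 1$, closing the argument. The routine part is the algebra in paragraphs 2--3; the conceptual crux is recognizing that $b^{2}<-4c$ is the ``horizontal-jump-stays-positive'' condition rather than a condition on $\varphi$ alone.
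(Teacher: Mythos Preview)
Your proposal is correct and follows essentially the same approach as the paper: both argue that (i) two positive fold points force $b>0$, $c<0$, $b^{2}>-3c$; (ii) positivity of the left landing point $x_{l}$ and of the local minimum value $\varphi(x_{m})$ force $b^{2}<-4c$ and $d>d_{0}$; and (iii) the equilibrium must lie on the middle branch. The paper states these implications and explicitly writes ``we skip the detailed derivations'', whereas you supply precisely those derivations --- the Vieta argument $x_{l}=b-2x_{M}$ yielding $b^{2}<-4c$, and the reduction of $\varphi(x_{m})$ to identify $d_{0}$ --- together with the standard Fenichel justification the paper leaves implicit.
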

\begin{proof}
    The restrictions directly come from the following understanding:
    (1). The critical manifold $y=\varphi(x)$ should be cubic-shaped with two fold points $x_m$ and $x_M$, which induces $b>0$, $c<0$. (2). We name the four vertices of the possible oscillation orbit $\Gamma_{\epsilon_1}$ as $(x_l, \varphi(x_M)), (x_m, \varphi(x_m)), (x_M, \varphi(x_M))$ and $(x_h, \varphi(x_m))$ with $x_l<x_m<x_M<x_h$, then the biochemical correspondence calls for the orbit $\Gamma_{\epsilon_1}$ lying strictly in the first quadrant of the phase plane, i.e., $x_l >0$ and $\varphi(x_m)>0$, which leads to $-3c<b^2<-4c$ and $d>d_0=(-b(b-\sqrt{b^2+3c})^2-6c(b-\sqrt{b^2+3c}))/27$. (3). The equilibrium point $E$ should emerge on the middle part of the critical manifold, that is to say, $x_m< \ell <x_M$. We provide a brief glance in Fig. \ref{fig3} and skip the detailed derivations.  $\hfill \blacksquare$
\end{proof}
Fig. \ref{fig3} also shows that $T_1$ (resp., $T_2$) refers to the time it takes to travel along the segment of oscillation orbit between $x_l$ and $x_m$ (resp., $x_h$ and $x_M$). We could omit the time spent on the two horizontal segments because it is an instantaneous behavior controlled by the $\epsilon_1$. The following lemma helps to simplify the analysis of $T_1/T_2$.

\begin{figure}
\begin{center}
\includegraphics[width=1.0\linewidth,scale=1.00]{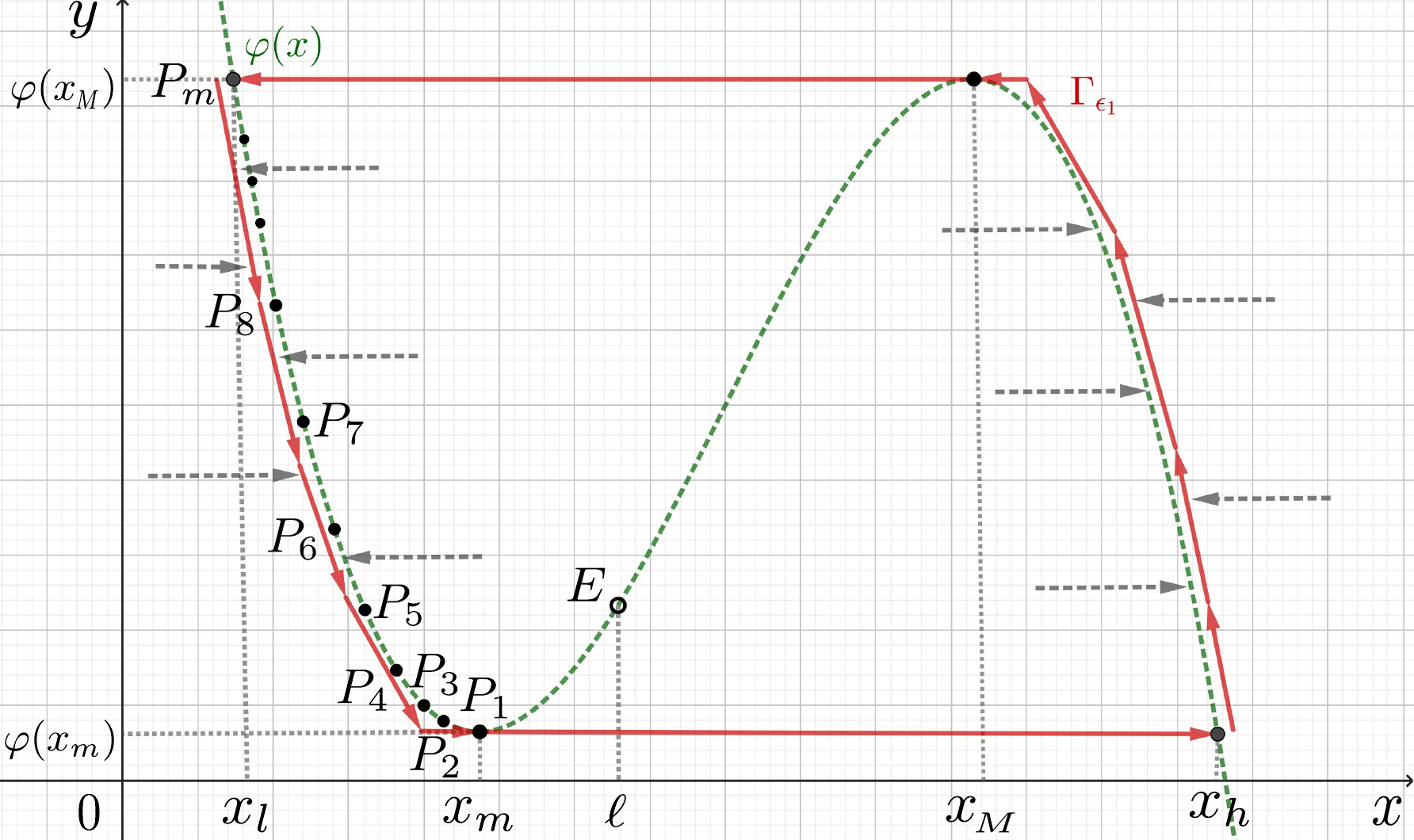}    % The printed column  
\caption{A brief glance on the phase plane under a general $\varphi(x)$ with the unique unstable equilibrium $E$ and the approximate position of the oscillation orbit $\Gamma_{\epsilon_1}$ described by the connected red lines for a regulation task of $m$ modules.}  % width is 8.4 cm.
\label{fig3}                                 % Size the figures 
\end{center}                                 % accordingly.
\end{figure}

\begin{lemma}
    The ratio $\frac{T_1}{T_2} \approx \frac{\int_{x_l}^{x_m}\frac{\varphi'(x)dx}{\eta_{1}(x-\ell)\varphi(x)}}{\int_{x_h}^{x_M}\frac{\varphi'(x)dx}{\eta_{1}(x-\ell)\varphi(x)}}$ decreases as $\ell$ increases with $x_m< \ell <x_M$.
\end{lemma}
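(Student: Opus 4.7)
The plan is to differentiate $T_1$ and $T_2$ separately with respect to $\ell$ and compare signs, rather than manipulating the ratio directly. Since the fold abscissae $x_m,x_M$ are the zeros of $\varphi'$, and the endpoint abscissae $x_l,x_h$ are determined by the level conditions $\varphi(x_l)=\varphi(x_M)$ and $\varphi(x_h)=\varphi(x_m)$, none of the four limits of integration depends on $\ell$; only the factor $1/(x-\ell)$ inside the integrand does. Moreover $\ell\in(x_m,x_M)$ lies strictly outside both intervals of integration, so no singularity arises and differentiation under the integral sign is routine.

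First I would compute
\begin{equation*}
\frac{dT_1}{d\ell}=\int_{x_l}^{x_m}\frac{\varphi'(x)}{\eta_1(x-\ell)^{2}\varphi(x)}\,dx .
\end{equation*}
On $(x_l,x_m)$ the critical manifold sits on its left decreasing branch, so $\varphi'(x)<0$; the preceding lemma guarantees that $\Gamma_{\epsilon_1}$ lies in the first quadrant, hence $\varphi(x)>0$; and $(x-\ell)^{2}>0$ trivially. The integrand is therefore strictly negative, giving $dT_1/d\ell<0$. An identical calculation yields
\begin{equation*}
\frac{dT_2}{d\ell}=\int_{x_h}^{x_M}\frac{\varphi'(x)}{\eta_1(x-\ell)^{2}\varphi(x)}\,dx .
\end{equation*}
On the right branch $(x_M,x_h)$ we again have $\varphi'(x)<0$ and $\varphi(x)>0$, but the orientation is reversed (the lower limit $x_h$ exceeds the upper limit $x_M$), so the same pointwise sign flips globally and $dT_2/d\ell>0$.

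Combining these two facts with the positivity $T_1,T_2>0$ in the quotient rule
\begin{equation*}
\frac{d}{d\ell}\!\left(\frac{T_1}{T_2}\right)=\frac{T_1'\,T_2 - T_1\,T_2'}{T_2^{\,2}}
\end{equation*}
produces a strictly negative derivative throughout $\ell\in(x_m,x_M)$, which is exactly the monotonicity claim. I do not expect any real obstacle: the whole argument reduces to two sign checks on the two slow branches of the cubic critical manifold, both underwritten by the previous lemma (positivity of $\varphi$ on the orbit, plus the elementary fact that $\varphi'$ is negative outside $[x_m,x_M]$). The only subtlety worth flagging in writing is justifying that the slow-flow approximation $dt=\varphi'(x)\,dx/(\eta_1(x-\ell)\varphi(x))$ extends continuously up to the fold abscissae, but since $\varphi'$ vanishes linearly there while $\varphi$ and $x-\ell$ remain bounded away from zero, the integrands are genuinely bounded and the monotonicity argument needs no additional care.
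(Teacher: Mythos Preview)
Your proposal is correct and follows precisely the paper's own route: the paper's proof simply asserts that $\partial T_1/\partial\ell<0$ and $\partial T_2/\partial\ell>0$ for $x_m<\ell<x_M$ and declares the lemma obvious, while you have supplied the explicit sign checks that justify those two inequalities. There is no substantive difference in approach.
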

\begin{proof}
    It is easy to check that $\frac{\partial T_1}{\partial \ell} <0$ and $\frac{\partial T_2}{\partial \ell} >0$ with $x_m< \ell <x_M$, so the lemma is obvious. $\hfill \blacksquare$
\end{proof}
 To explore the upper bound of the ratio $T_1/T_2$, we can let $\ell$ be close enough to $x_m$ and consider an alternative form as
\begin{equation}
    \frac{T_1}{T_2} \approx R(b,c,d)=\frac{\int_{x_l}^{x_m}\frac{x_M-x}{-x^3+bx^2+cx+d}dx}{\int_{x_h}^{x_M}\frac{x_M-x}{-x^3+bx^2+cx+d}dx}.
\end{equation}
Specifically, we focus on the boundary cases that $b^2=-4c$ and $d=d_0$, then the expression of $R(b,c,d)$ can be further simplified as 
\begin{equation}\label{R(b)}
    R(b)=\frac{\int_{0}^{\frac{1}{6}b}\frac{\frac{1}{2}b-x}{-(x-\frac{1}{6}b)^2(x-\frac{2}{3}b)}dx}{\int_{\frac{2}{3}b}^{\frac{1}{2}b}\frac{\frac{1}{2}b-x}{-(x-\frac{1}{6}b)^2(x-\frac{2}{3}b)}dx}.
\end{equation}
Noting that $b^2=-4c$ and $d=d_0$ separately contributes to $x_l=0$ and $\varphi(x_m)=0$, the former has no impact on the subsequent analysis, while the latter one leads to the disappearance of the oscillation (this is because the possible oscillation orbit intersects the X-axis, making all the points on the axis become degenerate stable equilibria). Based on these consideration, we need to introduce a little disturbance to avoid the case $d=d_0$, which inspires us to tackle with the following form 
\begin{equation}\label{specific form}
    R(b,\alpha,\beta)=\frac{\int_{0}^{\frac{1}{6}b}\frac{\frac{1}{2}b-x}{-((x-\frac{1}{6}b)^2+\alpha)(x-\frac{2}{3}b-\beta)}dx}{\int_{\frac{2}{3}b}^{\frac{1}{2}b}\frac{\frac{1}{2}b-x}{-((x-\frac{1}{6}b)^2+\alpha)(x-\frac{2}{3}b-\beta)}dx},
\end{equation}
with $0<\alpha, \beta \ll 1$ instead of directly substituting $d>d_0$ for easier analysis. The introduction of $\alpha$ and $\beta$ leads to the change of overall structure of $\varphi(x)$, which should be regarded as the disturbance on the expression of \eqref{R(b)}. We actually utilize $R(b, \alpha, \beta)$ in \eqref{specific form} under sufficiently small $\alpha$ and $\beta$ to approximate the possible values of $R(b)$ in \eqref{R(b)}. The following theorem shows that $R(b,\alpha,\beta)$ can go to the infinite, whose proof can be found in Appendix.

\begin{theorem}\label{thm1}
$\lim_{\alpha,\beta \to 0}R(b,\alpha ,\beta)=\infty$, $\forall b>0$ if $\lim_{\alpha,\beta \to 0}\frac{\beta}{\alpha} \ne 0$. 
\end{theorem}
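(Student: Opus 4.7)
The plan is to estimate the numerator $I_1(\alpha,\beta)$ and the denominator $I_2(\alpha,\beta)$ of $R(b,\alpha,\beta)$ in \eqref{specific form} separately and show that they diverge at incompatible rates as $\alpha,\beta\to 0$. The key observation is that the factor $(x-b/6)^2+\alpha$ appearing in $I_1$ vanishes quadratically at the upper endpoint $x=b/6$ once $\alpha\to 0$ and so contributes a divergence of order $1/\sqrt{\alpha}$, whereas the factor $x-2b/3-\beta$ in $I_2$ vanishes only linearly near $x=2b/3$ and so contributes only $\log(1/\beta)$. The hypothesis $\lim \beta/\alpha\neq 0$ will force $\log(1/\beta)=O(\log(1/\alpha))$, so the $1/\sqrt{\alpha}$ blow-up in the numerator overwhelms the logarithmic blow-up in the denominator.

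For the numerator I would rescale by setting $u=(x-b/6)/\sqrt{\alpha}$, which turns $(x-b/6)^2+\alpha$ into $\alpha(u^2+1)$ and yields
\begin{equation*}
I_1 = \frac{1}{\sqrt{\alpha}}\int_{-b/(6\sqrt{\alpha})}^{0} \frac{b/3-u\sqrt{\alpha}}{(u^2+1)\,(b/2 - u\sqrt{\alpha}+\beta)}\,du.
\end{equation*}
On the integration range $u\leq 0$, the factor $b/2 - u\sqrt{\alpha}+\beta$ is bounded below by $b/2>0$ uniformly in $\alpha,\beta$, so the integrand converges pointwise to $(2/3)/(u^2+1)$ and is dominated on $(-\infty,0]$ by an integrable majorant of the form $C/(u^2+1)$. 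Dominated convergence will then give $I_1 \sim \pi/(3\sqrt{\alpha})$ as $\alpha\to 0$, uniformly in $\beta$.

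For the denominator I would flip the orientation (since $2b/3>b/2$) and substitute $u=2b/3-x$ to obtain $I_2 = \int_0^{b/6}(b/6-u)/\bigl(((b/2-u)^2+\alpha)(u+\beta)\bigr)\,du$. Splitting the integration domain at some small fixed $\delta>0$, the piece on $[\delta,b/6]$ stays uniformly bounded in $(\alpha,\beta)$, while on $[0,\delta]$ the slowly varying prefactor $(b/6-u)/((b/2-u)^2+\alpha)$ is close to its value $(b/6)/(b/2)^2 = 2/(3b)$ at $u=0$, so the singular part evaluates to $(2/(3b))\log((\delta+\beta)/\beta)+O(1)$. This gives $I_2 = (2/(3b))\log(1/\beta)+O(1)$ as $\beta\to 0$.

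Combining the two asymptotics produces $R(b,\alpha,\beta)\sim \pi b/\bigl(2\sqrt{\alpha}\log(1/\beta)\bigr)$, and since the assumption that $\beta/\alpha$ stays bounded away from $0$ implies $\log(1/\beta)\leq \log(1/\alpha)+O(1)$, we conclude $\sqrt{\alpha}\log(1/\beta)\to 0$ and hence $R\to\infty$. The main obstacle I anticipate is making the dominated-convergence step for $I_1$ quantitative enough to be uniform in $\beta$: one needs the uniform lower bound $b/2-u\sqrt{\alpha}+\beta\geq b/2$ together with an integrable majorant on the full half-line $(-\infty,0]$, after which the logarithmic bound on $I_2$ and the final comparison are routine.
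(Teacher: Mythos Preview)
Your argument is correct and reaches the same conclusion as the paper, but by a genuinely different route. The paper carries out the partial-fraction decomposition of the integrand explicitly, integrates in closed form to obtain a quotient of elementary expressions in $\alpha,\beta$ (combinations of $\sqrt{\alpha}\log(\cdot)$ terms and an $\arctan$ term), and then checks term by term that the numerator tends to $\pi/3$ while every piece of the denominator tends to $0$ under the hypothesis $\lim\beta/\alpha\neq 0$. Your approach bypasses the exact antiderivative entirely: the rescaling $u=(x-b/6)/\sqrt{\alpha}$ together with dominated convergence extracts the $\pi/(3\sqrt{\alpha})$ asymptotic for $I_1$ directly, and the local analysis near $u=0$ isolates the $\frac{2}{3b}\log(1/\beta)$ behaviour of $I_2$. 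The two computations are consistent (the paper's closed form, after dividing numerator and denominator by $\sqrt{\alpha}$, yields exactly your leading terms), but your method is shorter and more robust, since it does not rely on getting a long algebraic computation right; the paper's method, on the other hand, gives an exact formula for $R(b,\alpha,\beta)$, which could in principle be used for finer estimates. Your handling of the final comparison, reducing the hypothesis to $\log(1/\beta)\le\log(1/\alpha)+O(1)$ and hence $\sqrt{\alpha}\log(1/\beta)\to 0$, is precisely the mechanism the paper also needs (and uses implicitly) to kill the last surviving term in its denominator.
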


Based on the theorem, $R(b,c,d)$, as the approximation of $T_1/T_2$, could also be large enough under the boundary condition $c=-\frac{1}{4}b^2,\  d \to d^+_0$ (i.e., $d-d_0$ converges to $0$ from the positive side). We then provide some simulation data based on the values of $b,c, d$ and $\ell$ in the Table \ref{tab:my_label}, showing that the range of $T_1/T_2$ is more than enough for the scale of arbitrary module regulation tasks in practice. 

\begin{table}[H]
    \caption{Simulation of $T_1/T_2$ under different choice of parameters.}
    \label{tab:my_label}
\begin{center}
    \begin{tabular}{cccccc}
    \toprule
        $b$ & $c$ & $d_0$ & $d$ & $\ell$ & $T_1/T_2$ \\
        \midrule
        12 & -36 & 32 & 32.1 & 2.2 & 9.319 \\
        12 & -36 & 32 & 32.1 & 2 & 23.976 \\
        12 & -36 & 32 & 32.01 & 2 & 55.547 \\
        16 & -64 & 2048/27 & 2048/27+0.01 & 8/3 & 77.582 \\
        20 & -100 & 4000/37 & 4000/27+0.01 & 10/3 & 101.106 \\
        20 & -100 & 4000/37 & 4000/27+0.001 & 10/3 & 259.200 \\
        24 & -144 & 256 & 256.001 & 4 & 326.026 \\
        30 & -225 & 500 & 500.0001 & 5 & 1166.990 \\
        \bottomrule
    \end{tabular}
    \end{center}
\end{table}
\begin{remark}
    One may note that although the boundary condition $c=-\frac{1}{4}b^2$ does not interfere the analysis of the oscillation, it would destroy the biochemical implementation because the actual oscillation orbit would go out of the first quadrant when $x_l=0$. The remedy is to let the parameter $c$ slightly smaller than $-\frac{1}{4}b^2$, which has little influence on the estimation of $T_1/T_2$. 
\end{remark}

\section{Error analysis and control induced by the clock signals}\label{sec:4}
In this section, we provide the analysis of calculation errors introduced by our symmetrical signals, and then discuss the corresponding measure to reduce the errors.

For a regulation task of $m$ modules $\mathcal{M}_1, \mathcal{M}_2, \dots, \mathcal{M}_m$ and $n$ species $S_1, S_2, \dots S_n$, our strategy and analysis are based on the specific form of system \eqref{detailed odes} shown in the previous section. An appropriate set of $(b,c,d,\ell)$ makes sure $T_1/T_2 = m-1$, with the period for a single loop of these $m$ modules to be $T=T_1+T_2$. According to the following equation:
\begin{equation}\label{eq:x_i}
    \frac{\int_{x_j}^{x_m}\frac{\varphi'(x)dx}{\eta_{1}(x-\ell)\varphi(x)}}{\int_{x_l}^{x_m}\frac{\varphi'(x)dx}{\eta_{1}(x-\ell)\varphi(x)}}=\frac{j-1}{m-1},  ~~ j=2, \dots, m-1,
\end{equation}
the $m$ initial points $P_1=(x_m,\varphi(x_m)), P_j=(x_j,\varphi(x_j))$, $P_m=(x_l,\varphi(x_M))$ with $j=2, \dots, m-1$ in the phase plane of $\Sigma_{xy}$ (see Fig. \ref{fig3}) can finally induce $m$ symmetric oscillatory signals $V_1, V_2, \dots, V_m$. Logically, every oscillatory signal $V_j$ enters into the corresponding module $\mathcal{M}_j$ and participates all reactions within $\mathcal{M}_j$ as a catalyst, e.g., the reaction $A\to B$ in $\mathcal{M}_j$ will change to be $A+V_j\to B+V_j$. Therefore, the oscillatory signal $V_j$ can control the occurrence or closure of $\mathcal{M}_j$ according to the nonzero or zero concentration of $V_j$. Moreover, the dynamics of the original species in $\mathcal{M}_j$ will keep unchanged whatever the concentration of $V_j$ is. However, it should be pointed out that the low amplitude of $v_j$ can not stay exactly at $0$ and the error is controlled by $O(\epsilon_1)$ as the following lemma says:
\begin{lemma}[\cite{Shi2022}]
    The low amplitude of $v_j$ can be viewed as a function of the driving element $x$ under the general cubic form of $\varphi(x)$ in (\ref{detailed odes}) as 
    \begin{equation}
        |v_j-0|< \frac{x_m}{p-x_m}\epsilon_1 = O(\epsilon_1),
    \end{equation}
    which corresponds to $x$ at the low amplitude (i.e., $x_l<x<x_m$) for $\forall j=1, 2, \dots, m$.
\end{lemma}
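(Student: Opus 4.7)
The plan is to exploit the singular perturbation structure of (\ref{detailed odes}). The subsystem $\Sigma_{uv}$ runs on the fast time scale $\epsilon_1\epsilon_2$ while the driving variable $x$ drifts slowly along the left branch of the critical manifold during its low-amplitude phase. I would therefore freeze $x\in(x_l,x_m)$ as a parameter and first locate the quasi-equilibrium of $\Sigma_{uv}$ by zeroing out the right-hand sides of (\ref{eq2c}) and (\ref{eq2d}), obtaining an explicit formula for the slow-manifold value $v=v_\star(x;\epsilon_1)$.

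The first step is an algebraic reduction. Subtracting the two quasi-equilibrium conditions $\epsilon_1(p-u)-uv=0$ and $\epsilon_1(x-v)-uv=0$ eliminates the common $uv$ term and yields the linear relation $u=v+(p-x)$. Substituting back leaves a single quadratic
\begin{equation*}
v^2+(p-x+\epsilon_1)\,v-\epsilon_1 x=0
\end{equation*}
with a unique non-negative root. A direct estimate of this root on $(x_l,x_m)$ — using $p>x_m$ to bound the denominator of the rationalized expression from below by $2(p-x)$, and the monotonicity of $x\mapsto x/(p-x)$ to locate the maximum at $x=x_m$ — produces the announced bound $v_\star<\frac{x_m}{p-x_m}\epsilon_1$. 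The same calculation applies to every $V_j$, since each signal arises from reusing (\ref{detailed odes}) with a different initial point on the critical manifold while the common driving variable $x$ traverses the same low-amplitude segment.

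The main obstacle, and the part I would devote most care to, is upgrading this algebraic quasi-steady-state estimate to a bound on the true trajectory $v(t)$. This calls for Fenichel's geometric singular perturbation theorem: I would verify that the Jacobian of the fast $(u,v)$ vector field at the quasi-equilibrium has two strictly negative eigenvalues throughout $x\in(x_l,x_m)$, so that the quasi-equilibrium traces out a normally hyperbolic attracting slow manifold. Standard estimates then give that after an $O(\epsilon_1\epsilon_2|\log\epsilon_1|)$ initial transient the actual trajectory tracks this manifold to within $O(\epsilon_1\epsilon_2)$, a correction comfortably absorbed into the stated $O(\epsilon_1)$ bound. A final brief check at the turning points $x_l$ and $x_m$, where $x$ briefly leaves the slow regime, confirms the bound persists over the entire low-amplitude segment, completing the argument.
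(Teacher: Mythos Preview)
The paper does not prove this lemma; it is quoted as a result from the authors' earlier work \citep{Shi2022}, so there is no in-paper proof to compare against.

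On its own merits your argument is sound and is the natural route. The quasi-equilibrium algebra is correct: subtracting the two stationary conditions of $\Sigma_{uv}$ gives $u=v+(p-x)$, and substitution yields exactly the quadratic $v^2+(p-x+\epsilon_1)v-\epsilon_1 x=0$. Rationalizing the positive root and bounding the denominator below by $2(p-x)$ (valid since $p>x_m>x$ on the low-amplitude segment, as required by the standing assumption $x_m<p<x_M$) gives $v_\star<\epsilon_1 x/(p-x)$, and monotonicity in $x$ delivers the stated bound $\frac{x_m}{p-x_m}\epsilon_1$. The invocation of Fenichel's theorem to upgrade the slow-manifold value to a trajectory bound is the standard tool here; the fast Jacobian at the quasi-equilibrium has trace $-\epsilon_1-(u+v)<0$ and determinant $\epsilon_1^2+\epsilon_1(u+v)>0$, so normal hyperbolicity is immediate. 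Your plan would yield a complete proof.
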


After inserting the clock signal $V_j$ into $\mathcal{M}_j$ as catalyst ($j=1, 2, \dots, m$), the execution time allocated for each single module is exactly $T/m$. We also name the time segment when $\mathcal{M}_j$ is turned on during the $k$th loop as $T^j_k$, which can be described as $$T^j_k= [(k-1)T+\frac{T}{m}(j-1), (k-1)T+\frac{T}{m}j].$$ Towards the $m$ modules to be regulated, the reactions in each module actually correspond to some specific arithmetic instructions. We claim that these reactions should converge exponentially to the stable equilibria. Under the {\it mass-action kinetics}, we can model the ODEs of the whole system as follows:
\begin{equation}\label{regulation system}
    \boldsymbol{\dot{s}}=\boldsymbol{H}(\boldsymbol{s}) \cdot \boldsymbol{v}.
\end{equation}
The vector $\boldsymbol{s}=(s_1, s_2, \dots, s_n)^{\top}$ and $\boldsymbol{v}=(v_1, v_2, \dots, v_m)^{\top}$ separately represents the concentration of the species and clock signals. The matrix $\boldsymbol{H}_{n \times m}(\boldsymbol{s})=(h_{i,j}(\boldsymbol{s}))$ describes the kinetics of the modules without clock signals and specifically, $h_{i,j}(\boldsymbol{s})$ refers to the kinetics of species $S_i$ in the module $\mathcal{M}_j$ in a polynomial form. We remind that $\frac{\partial h_{i,j}(\boldsymbol{s})}{\partial s_i} \leq 0$ and state it in the following lemma:
\begin{lemma}\label{lemma:4.2}
    The requirement for the existence of stable equilibrium in each $\mathcal{M}_j$ implies that $\frac{\partial h_{i,j}(\boldsymbol{s})}{\partial s_i} \leq 0$ for $\forall i=1, 2, \dots, n$ and $j=1, 2, \dots, m$. 
\end{lemma}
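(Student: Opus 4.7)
The plan is twofold. First, I would translate the claim into a reaction-wise sign condition using the mass-action form of $h_{i,j}$; second, I would argue that this reaction-wise condition is exactly what the stable-equilibrium requirement imposes.

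For the first step, writing $\mathcal{R}_j$ for the reactions constituting $\mathcal{M}_j$ and using the mass-action rate expression \eqref{dynamics2}, I would expand
\[
h_{i,j}(\boldsymbol{s}) = \sum_{r \in \mathcal{R}_j} (b_{ri}-a_{ri})\, k_r \prod_{l=1}^{n} s_l^{a_{rl}}.
\]
Only reactions with $a_{ri}\geq 1$ contribute to $\partial h_{i,j}/\partial s_i$, and each such contribution is the monomial
\[
(b_{ri}-a_{ri})\, a_{ri}\, k_r\, s_i^{a_{ri}-1} \prod_{l\neq i} s_l^{a_{rl}},
\]
whose sign on $\mathbb{R}^{n}_{\geq 0}$ is fixed and dictated entirely by the factor $(b_{ri}-a_{ri})$. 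Since distinct monomials in $\boldsymbol{s}$ are linearly independent, the global inequality $\partial h_{i,j}/\partial s_i \leq 0$ on the non-negative orthant is equivalent to the reaction-wise condition $b_{ri}\leq a_{ri}$ for every reaction of $\mathcal{M}_j$ that consumes $S_i$.

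For the second step, I would argue the necessity of this condition from the stability hypothesis. Suppose for contradiction that some reaction $r^{*}\in\mathcal{R}_j$ had $a_{r^{*}i}\geq 1$ and $b_{r^{*}i} > a_{r^{*}i}$. Freezing every $s_l$ with $l\neq i$ near its equilibrium value then leaves a scalar right-hand side in $s_i$ that is strictly increasing on a sub-interval, i.e., a positive self-feedback loop in the $S_i$-coordinate. Such a loop is incompatible with local asymptotic stability of the equilibrium of $\mathcal{M}_j$, and a fortiori with the exponential convergence assumed just above the lemma.

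The main obstacle, as I anticipate, is making the ``freeze other species'' reduction rigorous: the stability assumption is genuinely multivariate, and the one-dimensional projection is only a leading-order caricature. I would handle this by using the assumed exponential convergence of the other coordinates to control their transient deviation, then closing the estimate via a small-gain-type argument along the $s_i$-axis. A cleaner fallback reads the lemma as a design hypothesis intrinsic to the modules: the construction in Example \ref{two:module}, and more generally every computational module considered in this paper, enforces $b_{ri}\leq a_{ri}$ reaction by reaction, so verifying $\partial h_{i,j}/\partial s_i \leq 0$ reduces to inspecting the individual stoichiometries.
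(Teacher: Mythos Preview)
Your step~2 is essentially the paper's entire proof. The paper argues the contrapositive directly at the level of the partial derivative: assume $\partial h_{i,j}/\partial s_i > 0$, freeze $s_k = s_k^*$ for $k \neq i$, perturb $s_i$ upward, and observe that $\dot s_i = h_{i,j}(\boldsymbol{s}) > 0$, so $s_i$ continues to drift away and the equilibrium is unstable. There is no reaction-wise decomposition in the paper's argument.

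Your step~1 is therefore an unnecessary detour, and it contains a flaw: the equivalence you assert between the global sign condition $\partial h_{i,j}/\partial s_i \leq 0$ and the reaction-wise condition $b_{ri} \leq a_{ri}$ fails whenever two reactions in $\mathcal{M}_j$ share the same reactant complex. Their contributions then collapse onto the same monomial, and the combined coefficient can be nonpositive even if one individual $b_{ri}-a_{ri}$ is positive. The paper never needs this equivalence because it works with $h_{i,j}$ directly rather than reaction by reaction.

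On the rigor gap you flag: the paper does not address it. It simply performs the one-dimensional freeze and declares instability, with no small-gain closure and no discussion of cross-coupling. Your instinct is correct that a single positive diagonal Jacobian entry does not by itself preclude asymptotic stability of the full system, so the paper's argument is really the heuristic version of your step~2. Your fallback reading---treating the sign condition as a design hypothesis on the modules, verified by inspecting the stoichiometries as in Example~\ref{two:module}---is arguably the most honest way to state what is actually being used.
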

\begin{proof}
    We prove this lemma by proving its contrapositive. Assume that $\exists$ a set $(i,j)$ such that $\frac{\partial h_{i,j}(\boldsymbol{s})}{\partial s_i} > 0$ and the equilibrium of reactions in $\mathcal{M}_j$ can be noted as $(s_1^*, \dots, s_n^*)$. If we fix the value of $s_k=s_k^*$ with $k \ne i$ and increase $s_i$, then $\dot{s}_i=h_{i,j}(\boldsymbol{s})>0$ because $\frac{\partial h_{i,j}(\boldsymbol{s})}{\partial t}=\frac{\partial h_{i,j}(\boldsymbol{s})}{\partial s_i} \cdot \frac{\partial s_i}{\partial t} > 0$, which means $s_i$ will keep deviating from $s_i^*$ and the equilibrium becomes unstable. Therefore, the stable equilibrium of reactions in each $\mathcal{M}_j$ leads to nonpositive $\frac{\partial h_{i,j}(\boldsymbol{s})}{\partial s_i}$. 
    $\hfill \blacksquare$
\end{proof}

One can review the ODEs in \eqref{ode:ex1} of Example \ref{two:module} to verify 
Lemma \ref{lemma:4.2}. 

\begin{example}
 Our design of the reactions in $\mathcal{M}_1$ and $\mathcal{M}_2$ induces
 $H(s_1, s_2, s_3)=\begin{pmatrix}
  0&s_2-s_1 \\
  s_1+s_3-s_2&0 \\
  0&0
\end{pmatrix}$
satisfying $\frac{\partial h_{1,1}}{\partial s_1}=\frac{\partial h_{3,1}}{\partial s_3}=\frac{\partial h_{2,2}}{\partial s_2}=\frac{\partial h_{3,2}}{\partial s_3}=0$ and $\frac{\partial h_{2,1}}{\partial s_2}=\frac{\partial h_{1,2}}{\partial s_1}=-1$, which is consistent with Lemma \ref{lemma:4.2}.
\end{example}
During the segment $T^j_k$, we rewrite the signal vector $\boldsymbol{v}$ as $\boldsymbol{v}^j=(v^j_1, v^j_2, \dots, v^j_m)$, then except for $v^j_j$, the remaining $m-1$ elements are sufficiently small which can be controlled by $\epsilon=O(\epsilon_1)$. Since $v^j_j > x_M-p$, we can always keep $v^j_j \geq 1$ through selection of $x_M$ and $p$ so that this signal would accelerate the corresponding execution in $\mathcal{M}_j$ when it turns on the module.

The ideal regulation is under the condition $\epsilon=0$, so we consider the situation $\epsilon < \epsilon_0 \ll 1$, then the system \eqref{regulation system} becomes a singular perturbation problem. Focusing on the segment $T^j_k$ when the $\mathcal{M}_j$ is turned on while the others are off, we define a set $\mathcal{S}^j$ as a subset of the species set $\mathcal{S}$ that $\forall S_i \in \mathcal{S}^j, h_{i,j}(\boldsymbol{s}) \ne 0$. That is to say, $\mathcal{S}^j$ consists of species whose concentrations change in $\mathcal{M}_j$. We set $|\mathcal{S}^j|=l_j$ and renumber the set $\mathcal{S}$ by putting the $l_j$ species which belong to $\mathcal{S}^j$ in the front, just as $(S_1,S_2,\dots, S_n)=(\mathcal{S}_{\boldsymbol{\omega}} | \mathcal{S}_{\boldsymbol{\overline{\omega}}})$ with species in $\mathcal{S}_{\boldsymbol{\overline{\omega}}}$ act as catalysis in $\mathcal{M}_j$. We use $\boldsymbol{\omega}$ and $\boldsymbol{\overline{\omega}}$ to separately represent the concentration vector of species set $\mathcal{S}_{\boldsymbol{\omega}}$ and $\mathcal{S}_{\boldsymbol{\overline{\omega}}}$, then the original system \eqref{regulation system} during the segment $T^j_k$ correspondingly changes into
\begin{subequations}\label{system:E}
\begin{align}
     \dot{\boldsymbol{\omega}}&=\begin{pmatrix}
 \dot{s}_1\\
 \vdots\\
\dot{s}_{l_j}
\end{pmatrix}=
\begin{pmatrix}
H_1(\boldsymbol{s}) \\
 \vdots\\
H_{l_j}(\boldsymbol{s})
\end{pmatrix} \cdot \epsilon +
\begin{pmatrix}
h_{1,j}(\boldsymbol{s}) \\
 \vdots\\
h_{l_j,j}(\boldsymbol{s})
\end{pmatrix} \cdot v^j_j  \label{system:E:a} \ , \\
     \dot{\boldsymbol{\overline{\omega}}}&=\begin{pmatrix}
 \dot{s}_{l_j+1}\\
 \vdots\\
\dot{s}_n
\end{pmatrix}=
\begin{pmatrix}
H_{l_j+1}(\boldsymbol{s}) \\
 \vdots\\
H_{n}(\boldsymbol{s})
\end{pmatrix} \cdot \epsilon \label{system:E:b} \ , 
\end{align}
\end{subequations}

and the reduced system under $\epsilon=0$ can be described as
\begin{subequations}\label{system:E_0}
\begin{align}
     \dot{\boldsymbol{\omega}}_0&=\begin{pmatrix}
 \dot{s}_{0,1}\\
 \vdots\\
\dot{s}_{0,l_j}
\end{pmatrix}=
\begin{pmatrix}
h_{1,j}(\boldsymbol{s}_0) \\
 \vdots\\
h_{l_j,j}(\boldsymbol{s}_0)
\end{pmatrix} \cdot v^j_j  \label{system:E_0:a} \ , \\
\dot{\boldsymbol{\overline{\omega}}}_0&=\begin{pmatrix}
 \dot{s}_{0,l_j+1}\\
 \vdots\\
\dot{s}_{0,n}
\end{pmatrix} =0 \label{system:E_0:b} \ ,
\end{align}
\end{subequations}
with $\boldsymbol{\omega}, \boldsymbol{\omega}_0 \in \mathbb{R}^{l_j}_{\geq 0}, ~ \boldsymbol{\overline{\omega}}, \boldsymbol{\overline{\omega}}_0 \in \mathbb{R}^{n-l_j}_{\geq 0}, ~ H_i(\boldsymbol{s})=\sum_{k\ne j}h_{i,k}(\boldsymbol{s})$ and $\boldsymbol{s}_0=(\boldsymbol{\omega}_0|\boldsymbol{\overline{\omega}}_0)$ referring to the corresponding concentrations of species in the reduced system. Noting that the reactions in $\mathcal{M}_j$ converge to the equilibrium exponentially, system \eqref{system:E_0:a} has the solution $\boldsymbol{\omega}_0(t)$ in an exponential form, which can be approximated by the equilibrium equations $\left \{\boldsymbol{\omega}_0: h_{1,j}(\boldsymbol{s}_0)= \cdots h_{l_j,j}(\boldsymbol{s}_0)= 0 \right \}$ at the terminal of this segment. So the solution of the reduced system \eqref{system:E_0} can be described as $\boldsymbol{s}_0=(\boldsymbol{\omega}_0(t)|\boldsymbol{\overline{\omega}}_0(0))$. The sufficiently small $\epsilon$ implies that the system \eqref{system:E} is a regular perturbation system, then one could expect the original solution $\boldsymbol{s}$ as perturbation of the $\boldsymbol{s}_0$, i.e., $\exists$ vector functions $\boldsymbol{P}(t) \in \mathbb{R}^{l_j}$ and $\boldsymbol{Q}(t) \in \mathbb{R}^{n-l_j}$ which are not related to $\epsilon$ and satisfy
\begin{equation}
    \boldsymbol{\omega}=\boldsymbol{\omega}_0+\boldsymbol{P}(t) \cdot \epsilon \ , \boldsymbol{\overline{\omega}}=\boldsymbol{\overline{\omega}}_0+\boldsymbol{Q}(t) \cdot \epsilon \ .
\end{equation}
Substituting this formula back into the original system \eqref{system:E}, $\boldsymbol{P}(t)$ and $\boldsymbol{Q}(t)$ read like:
\begin{subequations}
    \begin{align}
 \boldsymbol{P}(t)&=\begin{pmatrix}
 \int_{t_1}^{t}H_1(\boldsymbol{s})dr  \\
 \vdots\\
 \int_{t_1}^{t}H_{l_j}(\boldsymbol{s})dr
\end{pmatrix} +
\begin{pmatrix}
\int_{t_1}^{t}\frac{h_{1,j}(\boldsymbol{s})-h_{1,j}(\boldsymbol{s}_0)}{\epsilon}  v^j_j dr\\
 \vdots\\
\int_{t_1}^{t}\frac{h_{l_j,j}(\boldsymbol{s})-h_{l_j,j}(\boldsymbol{s}_0)}{\epsilon} v^j_j dr  
\end{pmatrix} \label{P(t)} \ , \\
\boldsymbol{Q}(t)&=\begin{pmatrix}
\int_{t_1}^{t}H_{l_j+1}(\boldsymbol{s})dr  \\
 \vdots\\
\int_{t_1}^{t}H_n(\boldsymbol{s})dr
\end{pmatrix} \label{Q(t)} \ ,
    \end{align}
\end{subequations}
with $t_1=(k-1)T+\frac{T}{m}(j-1)$ and $t \leq t_2 = (k-1)T+\frac{T}{m}j$. One may note that ``$\epsilon$'' emerges in the integral of the second part of $\boldsymbol{P}(t)$, and we should point out that the term $\frac{h_{i,j}(\boldsymbol{s})-h_{i,j}(\boldsymbol{s}_0)}{\epsilon}$ can be converted into polynomial forms based on $\boldsymbol{P}$ and $\boldsymbol{Q}$ with elements of $\boldsymbol{s}$ and $\boldsymbol{s}_0$ as the coefficients. Focusing on $t=t_2$, the corresponding $\Delta(t)=(\boldsymbol{P}(t)\epsilon|\boldsymbol{Q}(t)\epsilon)$ actually describes the error between actual $\boldsymbol{s}(t)$ and ideal $\boldsymbol{s}_0(t)$ at the terminal time during the single segment $T^j_k$. Before we give further analysis on the expression of $\boldsymbol{P}(t)$ and $\boldsymbol{Q}(t)$, we set the following restrictions:
\begin{enumerate}
    \item $\dot{\boldsymbol{P}}(t), \dot{\boldsymbol{Q}}(t) \geq 0$ for $t_1 \leq t \leq t_2$, that is to say, errors accumulate unidirectionally in each segment $T^j_k$;
    \item $\boldsymbol{P}(t_1)=\boldsymbol{0}, \boldsymbol{Q}(t_1)=\boldsymbol{0}$ which means the accumulation of errors starts at the initial time of each segment $T^j_k$.
\end{enumerate}

The following lemma implies that we can linearize the expression of the second term of $\boldsymbol{P}(t)$, and then characterize its boundary. One can find the proof in Appendix.
\begin{lemma}\label{lemma:P(t)}
    The expression of the terminal error $\boldsymbol{P}(t_2)$ during the segment $T^j_k$ in \eqref{P(t)} can be further analyzed as
    \begin{subequations}
        \begin{align}
            (I_{l_j}-\rho  \hat{\Phi} ) \boldsymbol{P}(t) &\geq -\frac{H_0T}{m}  (\mathrm {1}_{l_j} + \rho \hat{\Psi} \cdot \mathrm {1}_{n-l_j}) \ , \\
            (I_{l_j}-\rho  \tilde{\Phi} ) \boldsymbol{P}(t) &\leq \frac{H_0T}{m} ( \mathrm {1}_{l_j} + \rho \tilde{\Psi} \cdot \mathrm {1}_{n-l_j}) \ ,
        \end{align}
    \end{subequations}
where $I_{l_j}$ is the identity matrix, $\rho=(x_h-p) \cdot \frac{T}{m}$ with parameters from our oscillator structure \eqref{detailed odes}, matrix $\hat{\Phi}_{l_j \times l_j}, \tilde{\Phi}_{l_j \times l_j}, \hat{\Psi}_{l_j \times (n-l_j)}$ and $\tilde{\Psi}_{l_j \times (n-l_j)}$ are decided by the polynomials $h_{1,j}, \dots, h_{l_j,j}$. $H_0$ is a upper bound of $|H_1(\boldsymbol{s})|, \dots, |H_{l_j}(\boldsymbol{s})|$ and each element of $\mathrm {1}_{l_j}$ and $\mathrm {1}_{n-l_j}$ is $1$.
\end{lemma}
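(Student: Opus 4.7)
The plan is to linearize the difference quotient $\frac{h_{i,j}(\boldsymbol{s})-h_{i,j}(\boldsymbol{s}_0)}{\epsilon}$ as an affine function of $(\boldsymbol{P},\boldsymbol{Q})$, exploit the monotonicity restrictions $\dot{\boldsymbol{P}},\dot{\boldsymbol{Q}}\geq 0$ with $\boldsymbol{P}(t_1)=\boldsymbol{Q}(t_1)=\boldsymbol{0}$ to pull the terminal values out of the integrals, and then reduce the remaining integrals to elementary estimates of $v^j_j$ and of $H_i(\boldsymbol{s})$. Moving the $\boldsymbol{P}$-dependent piece to the left then yields the two matrix inequalities.

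First, since each $h_{i,j}$ is a polynomial and $\boldsymbol{s}-\boldsymbol{s}_0=(\boldsymbol{P}\epsilon,\boldsymbol{Q}\epsilon)$, I would expand $h_{i,j}(\boldsymbol{s})$ about $\boldsymbol{s}_0$ as a finite Taylor sum. Dividing by $\epsilon$ gives a linear form $\Phi_i(\boldsymbol{s}_0)\boldsymbol{P}+\Psi_i(\boldsymbol{s}_0)\boldsymbol{Q}$ with coefficients $\partial h_{i,j}/\partial s_k$, plus higher-degree polynomial tails that carry positive powers of $\epsilon$. Taking entry-wise running suprema of the coefficients over the reachable concentration region yields $\tilde{\Phi}_{l_j\times l_j}$ and $\tilde{\Psi}_{l_j\times (n-l_j)}$; running infima (with sign bookkeeping) give $\hat{\Phi}$ and $\hat{\Psi}$. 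Lemma \ref{lemma:4.2} ensures the diagonal entries of $\tilde{\Phi}$ are non-positive, which is what makes $I_{l_j}-\rho\tilde{\Phi}$ the natural matrix to invert.

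Second, I would bound each piece of $\boldsymbol{P}(t)$ separately. The first summand in \eqref{P(t)} satisfies $\bigl|\int_{t_1}^{t} H_i(\boldsymbol{s})\,dr\bigr|\leq H_0(t-t_1)\leq H_0T/m$, contributing the $\pm\frac{H_0T}{m}\mathbf{1}_{l_j}$ on the right-hand sides, and the identical bound applies component-wise to $\boldsymbol{Q}(t)$ via \eqref{Q(t)}. For the second summand, during the active segment $T^j_k$ the signal obeys $v^j_j\leq x_h-p$, and by restrictions 1 and 2 both $\boldsymbol{P}(r)$ and $\boldsymbol{Q}(r)$ grow monotonically from zero, so they are pointwise dominated by their terminal values $\boldsymbol{P}(t)$ and $\boldsymbol{Q}(t)$. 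Factoring these maxima out and using $\int_{t_1}^{t} v^j_j\,dr\leq (x_h-p)\cdot T/m=\rho$ gives $\rho[\tilde{\Phi}_i\boldsymbol{P}(t)+\tilde{\Psi}_i\boldsymbol{Q}(t)]$ from above and $\rho[\hat{\Phi}_i\boldsymbol{P}(t)+\hat{\Psi}_i\boldsymbol{Q}(t)]$ from below. Substituting the component-wise bound $\boldsymbol{Q}(t)\leq\frac{H_0T}{m}\mathbf{1}_{n-l_j}$ and transposing the $\boldsymbol{P}$-term to the left recovers $(I_{l_j}-\rho\tilde{\Phi})\boldsymbol{P}(t)\leq\frac{H_0T}{m}(\mathbf{1}_{l_j}+\rho\tilde{\Psi}\mathbf{1}_{n-l_j})$, and the mirror-image lower bound is obtained symmetrically.

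The hardest part will be controlling the higher-degree tails of the Taylor expansion: since $h_{i,j}$ may have degree greater than one, the quotient $(h_{i,j}(\boldsymbol{s})-h_{i,j}(\boldsymbol{s}_0))/\epsilon$ contains leftover factors of $P_k\epsilon$ and $Q_k\epsilon$ that do not cancel against the single $\epsilon$ in the denominator. To fold these into the clean linear matrix form one needs a uniform a priori bound on $\|\boldsymbol{s}\|$ over the segment so that all partial derivatives and higher multilinear coefficients of $h_{i,j}$ are uniformly bounded; the resulting $O(\epsilon)$ slack can then be absorbed into the entry-wise suprema defining $\tilde{\Phi},\tilde{\Psi}$ (and correspondingly $\hat{\Phi},\hat{\Psi}$). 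Doing this absorption without circularly relying on the very bound for $\boldsymbol{P}(t)$ that we are trying to establish is the genuine technical point, and is presumably what motivates deferring the detailed proof to the appendix.
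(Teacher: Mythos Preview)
Your overall strategy---linearize the difference quotient, bound $\int H_i$ by $H_0T/m$, bound $v^j_j$ by $x_h-p$, and then move the $\boldsymbol{P}$-term to the left---is exactly the route the paper takes. The two organizational differences are worth noting.

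First, instead of a Taylor expansion about $\boldsymbol{s}_0$, the paper writes the polynomial difference \emph{exactly} as
\[
\frac{h_{i,j}(\boldsymbol{s})-h_{i,j}(\boldsymbol{s}_0)}{\epsilon}=\sum_{k=1}^{l_j}\Phi_{i,k}\,\boldsymbol{P}_k(t)+\sum_{q=l_j+1}^{n}\Psi_{i,q}\,\boldsymbol{Q}_q(t),
\]
where the coefficients $\Phi_{i,k},\Psi_{i,q}$ are themselves polynomials in the entries of \emph{both} $\boldsymbol{s}$ and $\boldsymbol{s}_0$ (think $a^2-b^2=(a+b)(a-b)$). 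This factorization is exact for polynomials, so there are no leftover higher-degree tails carrying stray powers of $\epsilon$: all the nonlinearity is parked inside the coefficients. Boundedness of $\boldsymbol{s},\boldsymbol{s}_0$ on the compact segment $T^j_k$ then bounds $\Phi_{i,k},\Psi_{i,q}$ directly, and the entrywise extrema give $\hat{\Phi},\tilde{\Phi},\hat{\Psi},\tilde{\Psi}$. This is precisely the device that dissolves the ``circularity'' you flagged in your last paragraph---you never need an a~priori bound on $\boldsymbol{P}$ itself, only on the concentration trajectories.

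Second, rather than pulling the terminal values of $\boldsymbol{P},\boldsymbol{Q}$ out of the integral via monotonicity, the paper applies the integral mean value theorem to obtain a value at some $t_0\in[t_1,t]$, then scales using $v^j_j(t_0)(t-t_1)\le(x_h-p)\,T/m=\rho$ and replaces $\boldsymbol{P}(t_0),\boldsymbol{Q}(t_0)$ by $\boldsymbol{P}(t_2),\boldsymbol{Q}(t_2)$. Your monotonicity argument and the paper's mean-value argument yield the same inequality; the paper's version is just slightly more compact.
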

Based on the lemma, we reach the conclusion about the boundary of the terminal error during each segment $T^j_k$. The proof and detailed expression are removed to Appendix.
\begin{theorem}\label{thm:error}
     $\boldsymbol{P}(t_2)$ and $\boldsymbol{Q}(t_2)$ are bounded with boundary controlled by the parameters from our oscillator structure under following three cases:
     \begin{enumerate}
         \item[\rm \rm 1.] $l_j=1$;
         \item[\rm \rm 2.] $(I_{l_j}-\rho \cdot \boldsymbol{H}_1)$ is positive definite for the linear system \eqref{regulation system} with $\boldsymbol{H}(\boldsymbol{s})=\boldsymbol{H} \cdot \boldsymbol{s}+ \boldsymbol{c}$ and matrix $\boldsymbol{H}_1$ is the $l_j$-dimentional leading principal submatrix;
         \item[\rm \rm 3.] both $(I_{l_j}-\rho \hat{\Phi})$ and $(I_{l_j}-\rho \tilde{\Phi})$ are positive definite as Lemma \ref{lemma:P(t)} shows.
     \end{enumerate}
In particular, both the upper bound of $\boldsymbol{P}(t_2)$ and $\boldsymbol{Q}(t_2)$ in the three cases are proportional to $\frac{H_0T}{m}$.
\end{theorem}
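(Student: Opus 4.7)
The plan is to separate the two error vectors and bound them individually. The easier piece is $\boldsymbol{Q}(t_2)$: directly from \eqref{Q(t)} each component is an integral of some $H_{l_j+k}(\boldsymbol{s})$ over an interval of length $t_2-t_1 = \tfrac{T}{m}$; using the uniform bound $|H_i(\boldsymbol{s})|\leq H_0$ on the compact range of concentrations during $T^j_k$ yields $|Q_k(t_2)|\leq \tfrac{H_0 T}{m}$ in all three cases with no additional hypotheses. Hence the entire analysis reduces to bounding $\boldsymbol{P}(t_2)$, for which I would take Lemma \ref{lemma:P(t)} as the workhorse: it already provides a two-sided componentwise linear system whose right-hand sides are of order $\tfrac{H_0T}{m}$, so the only remaining work is to invert the coefficient matrices $(I_{l_j} - \rho\hat{\Phi})$ and $(I_{l_j} - \rho\tilde{\Phi})$ in a way that preserves those bounds.

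The three enumerated cases correspond to three hypotheses that validate this inversion. In Case~1 ($l_j = 1$) everything is scalar; combining Lemma \ref{lemma:4.2} with the construction of $\hat{\Phi},\tilde{\Phi}$ in the proof of Lemma \ref{lemma:P(t)} one expects the scalars $\hat{\phi},\tilde{\phi}$ to be nonpositive, so $1-\rho\hat{\phi}\geq 1$ and $1-\rho\tilde{\phi}\geq 1$, and division by the positive scalar recovers a bound on $P(t_2)$ proportional to $\tfrac{H_0 T}{m}$. In Case~2 the polynomial kinetics collapse to an affine map; the linearization underlying Lemma \ref{lemma:P(t)} is then exact, so $\hat{\Phi} = \tilde{\Phi} = \boldsymbol{H}_1$ and the two inequalities merge into a single linear equation whose coefficient matrix, being positive definite by assumption, has $\|(I_{l_j}-\rho\boldsymbol{H}_1)^{-1}\|\leq 1/\lambda_{\min}$, translating the right-hand side estimate into a norm bound on $\boldsymbol{P}(t_2)$. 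Case~3 is the general polynomial situation: positive definiteness of both $(I_{l_j}-\rho\hat{\Phi})$ and $(I_{l_j}-\rho\tilde{\Phi})$ gives bounded inverses which, applied to the right-hand sides of Lemma \ref{lemma:P(t)}, yield the desired bound on $\boldsymbol{P}(t_2)$.

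The main obstacle I anticipate is the mismatch between the componentwise inequalities of Lemma \ref{lemma:P(t)} and the matrix-norm machinery natural to positive definiteness. Strictly speaking, turning $(I_{l_j}-\rho\hat{\Phi})\boldsymbol{P}(t_2)\leq b$ into a componentwise upper bound on $\boldsymbol{P}(t_2)$ asks that $(I_{l_j}-\rho\hat{\Phi})^{-1}$ be an entrywise-nonnegative (M-matrix-type) operator, which is not guaranteed by positive definiteness alone. I therefore expect the theorem to be formulated in a suitable vector norm, with a constant of proportionality involving $\lambda_{\min}$ together with $\|\hat{\Psi}\|$ and $\|\tilde{\Psi}\|$; the scalar case sidesteps this because monotonicity is automatic in one dimension, and the linear case collapses the two inequalities into a single equation, so the extra subtlety is really concentrated in Case~3.
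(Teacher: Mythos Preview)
Your proposal is essentially the same as the paper's own proof: bound $\boldsymbol{Q}(t_2)$ directly by $\frac{H_0T}{m}$ from \eqref{Q(t)}, then in Case~1 use Lemma~\ref{lemma:4.2} to force the scalar coefficient $\phi\le 0$ and divide, in Case~2 observe that for affine $\boldsymbol{H}(\boldsymbol{s})$ the linearization is exact so the bounding matrices collapse to $\boldsymbol{H}_1$, and in Case~3 invoke Lemma~\ref{lemma:P(t)} and invert. The paper arrives at exactly the componentwise bounds $\boldsymbol{P}(t_2)\le \frac{H_0T}{m}(I_{l_j}-\rho\tilde{\Phi})^{-1}(\mathrm{1}_{l_j}+\rho\tilde{\Psi}\,\mathrm{1}_{n-l_j})$ and the analogous lower bound, so your outline matches it step for step.

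Regarding the obstacle you flag in your final paragraph: the paper does \emph{not} resolve it. It simply writes ``$(I_{l_j}-\rho\boldsymbol{H})^{-1}$ exists and is also positive definite, then it follows\ldots'' and records the componentwise inequalities, without addressing that preserving componentwise order under inversion requires an M-matrix (inverse-positive) property rather than positive definiteness. Your instinct that the passage from $(I_{l_j}-\rho\hat{\Phi})\boldsymbol{P}\le b$ to $\boldsymbol{P}\le(I_{l_j}-\rho\hat{\Phi})^{-1}b$ is the delicate point is correct; the paper's proof is at the same level of rigor you anticipated, and your suggestion to read the conclusion in a norm sense (with constants depending on $\lambda_{\min}$) is in fact a cleaner way to state what is actually established.
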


The upper boundary of $\boldsymbol{P}(t)$ we give is mainly determined by the kinetics of the modules to be regulated i.e. the matrix $\boldsymbol{H}(\boldsymbol{s})$, while the parameters $T$ and $\rho$ from our oscillator structure also have some influence. So far, we have tackled with the quantification of the error in a single segment $T^j_k$, and the accumulation of the error during the sequential execution of modules can be viewed as higher-order terms of $\epsilon$ appended to the original result which could be omitted under a sufficiently small $\epsilon$. We utilize an example of four-module regulation as follows to show the error during the first loop, separately $T^1_1, T^2_1, T^3_1$ and $T^4_1$.
\begin{example}\label{example:M4}
    Assume that we have four reaction modules to be regulated as follows:
    \begin{align*}
\mathcal{M}_1:
        S_{1} &\to S_{1} + S_{2}, &\mathcal{M}_2:  &S_{1} \to S_{1} + S_{2}, \\
        S_{2} &\to S_{1} + S_{2}, S_1 \to \varnothing, &   &S_{2} \to \varnothing, \\
        S_2 &\overset{2}{\rightarrow} \varnothing,  \varnothing \to S_2; &  \\
        \mathcal{M}_3: S_2 &\to S_1 + S_2, &\mathcal{M}_4:  &S_1 +S_2 \to S_1, \\
        S_1 &\to \varnothing;   & &\varnothing \overset{2}{\rightarrow} S_1, \varnothing \overset{2}{\rightarrow} S_2, \\
        & & &S_{1} \to \varnothing.
    \end{align*}
The whole kinetics express as
        \begin{align*}
        \dot{s}_1 &= (s_2-s_1)v_1 + (s_2-s_1)v_3 + (2-s_1)v_4\ ,\\
        \dot{s}_2 &= (s_1-2s_2+1)v_1 + (s_1-s_2+1)v_2 + (2-s_1s_2)v_4 
        \end{align*}
with $h_{1,1}=h_{1,3}=s_2-s_1, h_{1,2}=h_{2,3}=0, h_{1,4}=2-s_1, h_{2,1}=s_1-2s_2+1, h_{2,2}=s_1-s_2+1, h_{2,4}=2-s_1s_2$. 
\end{example}
Reactions in each module converge to the equilibrium exponentially, and the ideal output of the sequential execution of these four modules is $(1,1) \to (1,2) \to (2,2) \to (2,1)$. We use the oscillator model \eqref{detailed odes} to generate the four signals $v_1, v_2, v_3$ and $v_4$ as Fig. \ref{fig2} shows, with  $T=32.927, x_h-p=2, \rho=16.464$. We then demonstrate the error in every segment $T^j_1$ of the first loop ($j=1,2,3,4$). \\
\textbf{During $T^1_1$}: This is the case \rm 2 in Theorem \ref{thm:error}, with $\boldsymbol{\omega}=\left \{s_1,s_2\right \}$, $H_1=s_2-2s_1+2, H_2=s_1-s_2-s_1s_2+3$ and $\boldsymbol{H}_1=\boldsymbol{H}=\begin{pmatrix}
  -1&1 \\
  1&-2
\end{pmatrix}, (I_2-\rho \boldsymbol{H})^{-1}=\begin{pmatrix}
  0.1055&0.0512 \\
  0.0512&0.0543
\end{pmatrix}$. We can just choose $H_0=2$ and then $\boldsymbol{P}(t) \in \mathrm{R}^2$ satisfies $|P_1(t_2)| \leq 2.580, |P_2(t_2)| \leq 1.737$ with $t_2= 8.232$. 

\textbf{During $T^2_1$}: This is the case \rm 1 in Theorem \ref{thm:error}, with $\boldsymbol{\omega}=\left \{s_2\right \}$, $H_1=2s_2-3s_1+2, H_2=s_1-2s_2-s_1s_2+3$. This time $\boldsymbol{P}(t)$ becomes a scalar and $\boldsymbol{Q}(t)$ exists. We choose $H_0=3$, then $|Q(t_2)| \leq 24.695$, $\phi=-1$, $\psi$ corresponds to the upper boundary of $Q(t)$. So the inequality \eqref{xi} reduces to $|P(t_2)| \leq (\frac{H_0T}{m}+\rho \frac{H_0T}{m})/(1+\rho) = 24.695$ with $t_2=16.464$.

\textbf{During $T^3_1$}: This segment is similar to the previous one, with $\boldsymbol{\omega}=\left \{s_1\right \}$, $H_1=s_2-2s_1+2, H_2=2s_1-3s_2-s_1s_2+4$. We also choose $H_0=3$ and have $|P(t_2)|=|Q(t_2)| \leq \frac{H_0T}{m}= 24.695$ with $t_2= 24.695$.

\textbf{During $T^4_1$}: This is the case \rm 3 in Theorem \ref{thm:error}, with $\boldsymbol{\omega}=\left \{s_1, s_2\right \}$, $H_1=2s_2-2s_1, H_2=2s_1-3s_2+2$. We choose $H_0=3$, $\hat{\Phi}=\tilde{\Phi}=\begin{pmatrix}
  -1&0 \\
  -1&-2
\end{pmatrix}$, and $(I_2-\rho \hat{\Phi})^{-1}=(I_2-\rho \tilde{\Phi})^{-1}=\begin{pmatrix}
  0.0573&0 \\
  -0.0295&-0.0295
\end{pmatrix}$. Then we have $|P_1(t_2)| \leq 1.415$ and $P_2(t_2)=0$ with $t_2=32.927$.

\begin{figure}
\begin{center}
\includegraphics[width=1.0\linewidth,scale=1.00]{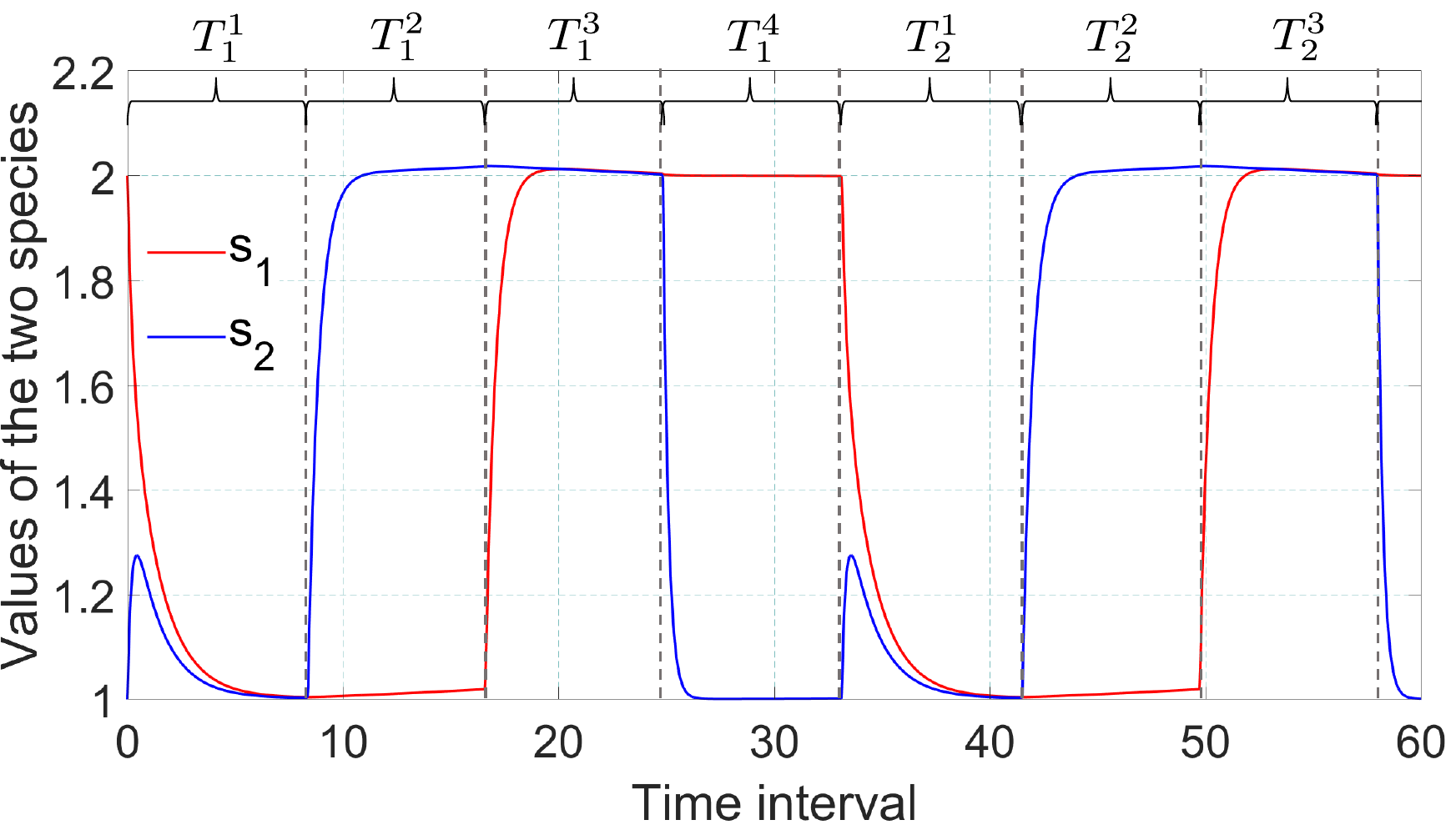}    % The printed column  
\caption{Simulation result of the Example \ref{example:M4}. Values of $s_1$ and $s_2$ are oscillating between $1$ and $2$, which is consistent with our expectation. }  % width is 8.4 cm.
\label{fig4}                                 % Size the figures 
\end{center}                                 % accordingly.
\end{figure}

The whole simulation result is provided in Fig. \ref{fig4} with $\epsilon_1=\epsilon_2=0.001$, $\epsilon=O(\epsilon_1)<0.002$, values of $s_1$ and $s_2$ are generally controlled between $1$ and $2$, just as we expect. Specially, the errors during $T^2_1$ and $T^3_1$ are larger than those during $T^1_1$ and $T^4_1$, which is also coincident with our analysis of upper boundary of $\boldsymbol{P}(t_2)$ and $\boldsymbol{Q}(t_2)$. The dynamic performance in the second loop is basically the same as that in the first loop, which also confirms that the errors accumulated between adjacent modules can be ignored. Noting that the period $T$ plays an important part in the errors of the whole regulation task, we demonstrate the simulation result of $s_1$ and $s_2$ with different choices of $T$ in Fig. \ref{fig5}. 
\begin{remark}
    Although it seems that a smaller $T$ induces smaller errors, when $T$ is insufficient to cover the time for the reactions to reach the equilibrium, the error caused by treating concentration of $S_1$ and $S_2$ in finite time as equilibrium emerges. Therefore, in the actual regulation, setting the whole period $T$ of the oscillator model \eqref{basic odes} needs to take both the complete execution time of the reactions in each module and the regulation error introduced by the oscillatory signals $V_j$ into account.
\end{remark}

\begin{figure}
\begin{center}
\includegraphics[width=1.0\linewidth,scale=1.00]{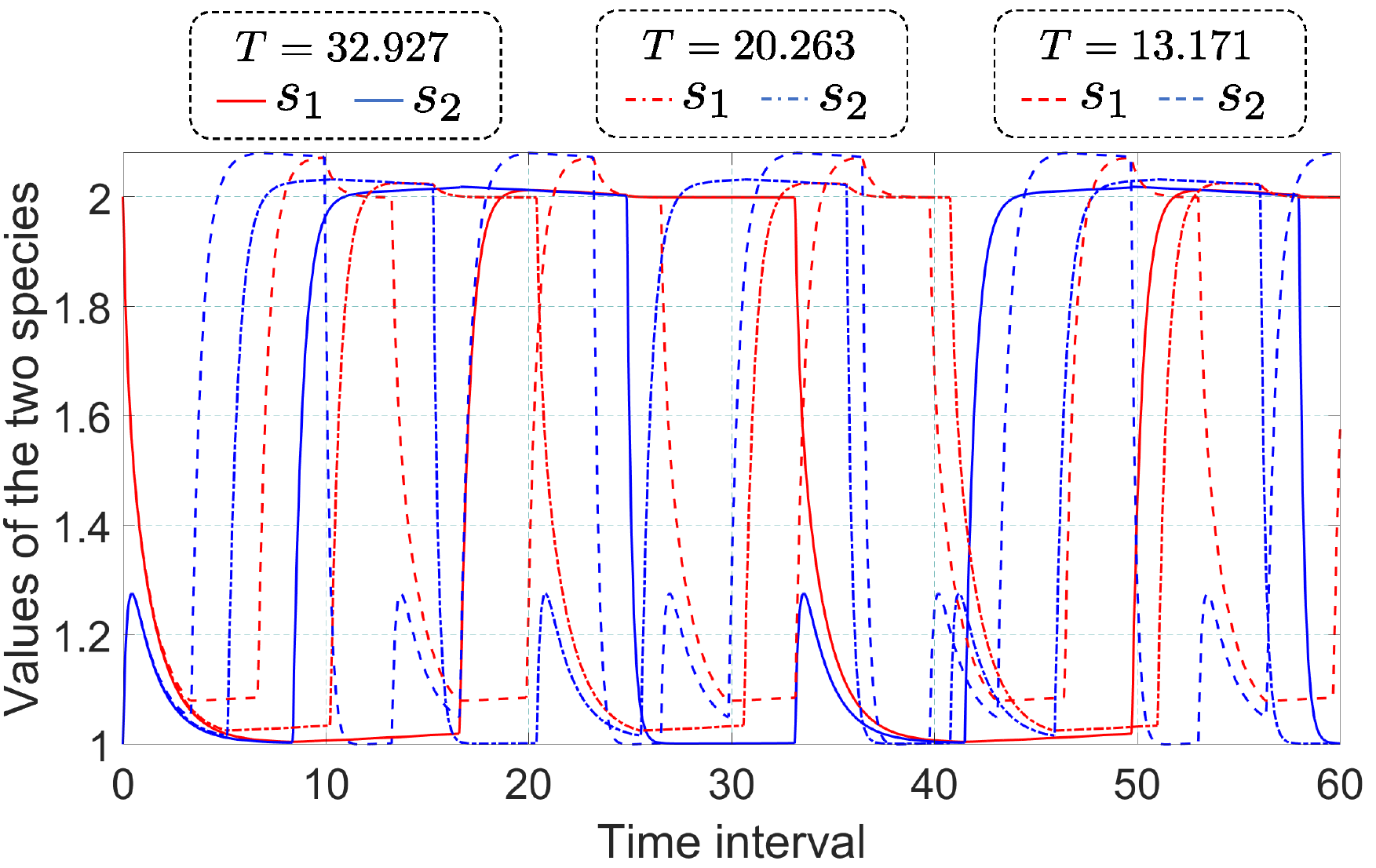}    % The printed column  
\caption{Simulation result of values of $s_1$ and $s_2$ based on different choices of $T$ in Example \ref{example:M4}. The values of $s_1$ and $s_2$ with the same period are marked in red and blue separately, and the values with different periods are distinguished by lines of different shapes.}  % width is 8.4 cm.
\label{fig5}                                 % Size the figures 
\end{center}                                 % accordingly.
\end{figure}

\section{Termination strategy and practical biochemical application}\label{sec:5}
In this section, we provide the general framework for the loop termination strategy and apply our design scheme for multiple periodical signals under a synthetic genetic oscillator and the Oregonator model, showing that our oscillator design provides guidance for module regulation tasks in practical biochemical experiments.

With respect to the computer instructions for loop termination, there are mainly two approaches: termination after a specified number of module loops and termination after the concentration of a species reaching a specific value. Whatever the case, we utilize a truncated subtraction module as follows to implement the conditional judgment which should be placed in front of all the modules to be regulated in actual tasks:
\begin{align*}\mathcal{M}_0:~
    &W_1 + W_0 \overset{\kappa}{\rightarrow} W_1 +2W_0 \ , \\
    &W_2 + W_0 \overset{\kappa}{\rightarrow} W_2 \ , 
    2W_0 \overset{\kappa}{\rightarrow} W_0 \ .
\end{align*}
The concentrations of $W_1$ and $W_2$ represent two variables in the conditional judgment i.e., the preset and current times of loops or the target and real-time concentrations of specific species. Species $W_0$ acts as a universal catalyst for the reactions in the subsequent modules, whose concentration converges exponentially to the truncated result of $w_1-w_2$, then the entire regulation system will be shut down when $w_1$ equals $w_2$. We can also control the reaction rate $\kappa$ to be lager than other reactions so that the loop termination can be done almost instantaneously. 
%\begin{figure*}
%\begin{center}
%\includegraphics[width=0.8\linewidth,scale=1.00]{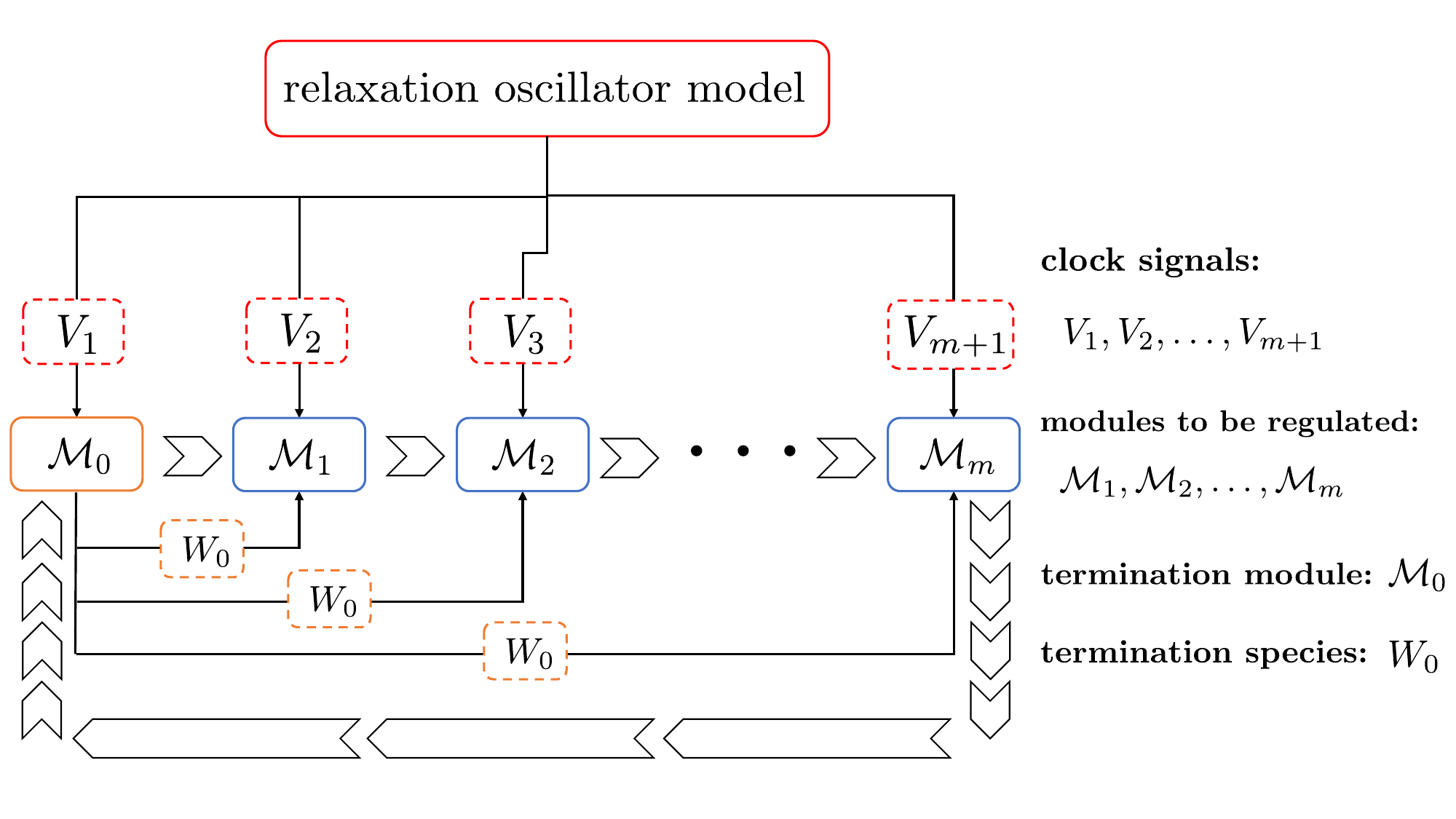}    % The printed column  
%\caption{A schematic diagram for arbitrary regulation task of $m$ reaction modules. The modules in blue are target modules and the one in orange is our additional termination module. We use the dashed boxes to represent species and the cycle connected by the square arrows refers to the actual module regulation and loop.}  % width is 8.4 cm.
%\label{fig6}                                 % Size the figures 
%\end{center}                                 % accordingly.
%\end{figure*}

Another issue is the biochemical realization and application of our oscillator design. Usually, the modules to be regulated are described in the form of CRNs, and our oscillator model with general cubic form of $\varphi$ in the system \eqref{detailed odes} can be transformed back to CRNs as 
\begin{equation}\label{crn1}
    \begin{aligned}
   \xymatrix{ 3X \ar  @{ -^{>}}^{b\eta_1/\epsilon_1} @< 1pt> [r] & 4X  \ar  @{ -^{>}}^{\eta_1/\epsilon_1}  @< 1pt> [l] }\ , ~~~
   &\xymatrix{ X \ar  @{ -^{>}}^{c\eta_1/\epsilon_1} @< 1pt> [r] & 2X  \ar  @{ -^{>}}^{d\eta_1/\epsilon_1}  @< 1pt> [l] }\ , \\
X+2Y \xleftarrow{\eta_1} X+&Y \xrightarrow{\eta_1/\epsilon_1} Y \xrightarrow{\ell \eta_1} \varnothing\ ; \\
P \xrightarrow{\eta_1/\epsilon_2} P+U\ , ~~
&X \xrightarrow{\eta_1/\epsilon_2} X+V \ , \\
U \xrightarrow{\eta_1/\epsilon_2} \varnothing \ , ~ V \xrightarrow{\eta_1/\epsilon_2} &\varnothing\ , ~ U+V \xrightarrow{\eta_1/(\epsilon_1\epsilon_2)} \varnothing\ .
 \end{aligned}
\end{equation}
This CRN can be viewed as a specific biochemical oscillator which generates a clock signal $V$ under one selection of initial values. Technologies such as the DNA chain displacement cascades given by \cite{Soloveichik2010} can offer help to transform these abstract chemical reactions into real DNA chain reactions for experiments, and the transformation diagrams can be found in \citep{Shi2022}. In this paper we provide a new perspective. Noting that we actually demonstrate a design scheme for arbitrary number of symmetrical clock signals through our relaxation oscillator model \eqref{detailed odes}, the system \eqref{basic odes} can be replaced with any structure capable of producing relaxation oscillations. Our scheme is essentially to process a single relaxation oscillator into multiple periodic signals that meet our requirements in Definition \ref{def1}, which is illustrated in the following two examples.
\begin{example}\label{example:M}
    \cite{McMillen2002} proposed a synthetic genetic network in {\it Escherichia coli} with five fast reactions involving the binding of proteins:
    \begin{equation*}
        \begin{aligned}
            \xymatrix{ 4X \ar  @{ -^{>}}^{K_1} @< 1pt> [r] & X_4  \ar  @{ -^{>}}^{}  @< 1pt> [l] }, & ~
            \xymatrix{ D+ 4X \ar  @{ -^{>}}^{~~~K_2} @< 1pt> [r] & D_X \ar  @{ -^{>}}^{}  @< 1pt> [l] }, & \! \! \! \!
            \xymatrix{ A+R \ar  @{ -^{>}}^{~~~K_3} @< 1pt> [r] & C \ar  @{ -^{>}}^{}  @< 1pt> [l] }, \\
            \xymatrix{ C+C\ar  @{ -^{>}}^{~~~K_4} @< 1pt> [r] & C_2  \ar  @{ -^{>}}^{}  @< 1pt> [l] }, & ~
            \xymatrix{ D^L+ C_2 \ar  @{ -^{>}}^{~~~K_5} @< 1pt> [r] & D^L_C \ar  @{ -^{>}}^{}  @< 1pt> [l] }; \\
        \end{aligned}
    \end{equation*}
and six slow reactions involving transcription of mRNA and translation of proteins:
    \begin{equation*}
        \begin{aligned}
    D+P &\xrightarrow{k_t} D+P+n_xX+n_yY+n_lL\ , \\
    D_X+P &\xrightarrow{\alpha k_t} D_X+P+n_xX+n_yY+n_lL\ , \\
    D^L+P &\xrightarrow{k^L_t} D^L+P+n_xX\ , \\
    D^L_C+P &\xrightarrow{\beta k^L_t} D^L+P+n_xX\ , \\
    L+[substrates] &\xrightarrow{k_{la}} L+A\ , ~~~~ X+Y \xrightarrow{k_{xy}} Y\ .\\
        \end{aligned}
    \end{equation*}
The ODEs related to the relaxation oscillator are expressed in the following form through {\it Michaelis-Menten kinetics} after substitutions:
    \begin{subequations}\label{eq:ex5.1}
        \begin{align}
            \dot{x}&=m_xf(x)-\gamma_{xy}xy-\gamma_xx+\mu_xg(a) \ , \\
            \dot{y}&=m_yf(x)-\gamma_yy \ ,
        \end{align}
    \end{subequations}
    where $f(x)=(1+\alpha x^4)/(1+x^4)$, $g(a)=10$ at high levels of $a$. Both $x$ and $y$ refer to the concentrations of specific proteins. We follow the selection of parameters of \citep{McMillen2002} except for adjusting $m_y$ to 1.8, and apply the scheme of constructing four periodical signals to this model, the simulation result is shown in Fig. \ref{fig7}.
\end{example}

\begin{figure}
\begin{center}
\includegraphics[width=1.0\linewidth,scale=1.00]{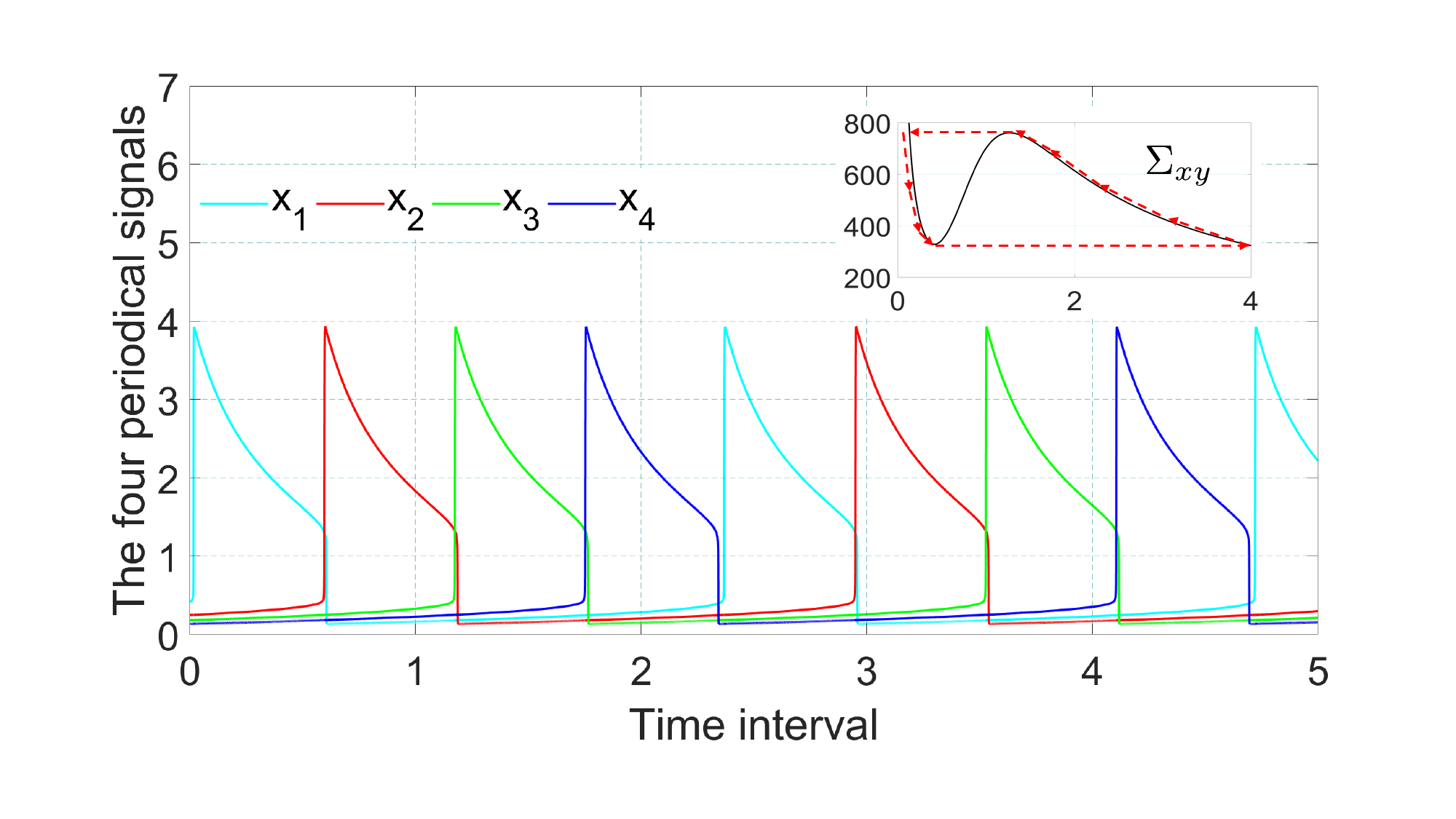}    % The printed column  
\caption{Design of four periodical signals under the oscillator model of \cite{Soloveichik2010}. The cubic-shaped critical manifold of $\Sigma_{xy}$ is displayed in the upper right corner, with the oscillation orbit in red dashed lines.}  % width is 8.4 cm.
\label{fig7}                                 % Size the figures 
\end{center}                                 % accordingly.
\end{figure}

\begin{example}\label{example:M2}
    \cite{Field1972} reduced the mechanism of the Belousov-Zhabotinskii reactions and gave the following Oregonator model consisting of five reactions:
\begin{equation*}
        \begin{aligned}
    BrO^-_3+Br^- &\xrightarrow{k_1} HBrO_2+HOBr\ ,  \\
    HBrO_2+Br^- &\xrightarrow{k_2} 2HOBr\ , \\
    BrO^-_3+HBrO_2 &\xrightarrow{k_3} 2HBrO_2+2Ce^{4+}\ ,  \\
    2HBrO_2 &\xrightarrow{k_4} BrO^-_3+HOBr\ ,\\
    Ce^{4+}+CH_2(COOH)_2&+BrCH(COOH)_2 \xrightarrow{k_5} hBr^-\ ,
        \end{aligned}
    \end{equation*}
where $h$ is a stoichiometric parameter. The two-variable Oregonator is expressed after substitutions as:
    \begin{subequations}
        \begin{align}
            \epsilon\dot{x}&=x(1-x)-fy\frac{x-q}{x+q} \ , \\
            \dot{y}&=kx-y \ ,
        \end{align}
    \end{subequations}
where $x$ and $y$ represent the concentration of $HBrO_2$ and $Ce^{4+}$, respectively, and $\epsilon$ is a sufficiently small constant, while $f, q$ and $k$ are parameters characterizing the reaction sequences. We follow the parameter selection of \citep{Field1972} except for adjusting $k$ to $0.74$, and also construct four clock signals based on this oscillator, the simulation result is provided in Fig. \ref{fig8}.
\end{example}

We obtain the diagrams in Fig. \ref{fig7} and Fig. \ref{fig8} similar to Fig. \ref{fig2}, which means that our design scheme can be transferred to other relaxation oscillation structures with more biochemical counterparts. Specifically, the result based on Example \ref{example:M} can be naturally applied to the sequence regulation of biochemical modules based on genetic regulatory networks, and Example \ref{example:M2} implies that our scheme could be combined with real chemical oscillations.
One may also notice that there is still some distance between the low amplitude of $x_i$ and $0$ in Fig. \ref{fig7}, which indicates that the design of our truncated subtraction module corresponding to the subsystem $\Sigma_{uv}$ in \eqref{detailed odes} is necessary. Therefore, our design scheme could provide guidance for module regulation tasks in practical biochemical experiments.

\begin{figure}
\begin{center}
\includegraphics[width=1.0\linewidth,scale=1.00]{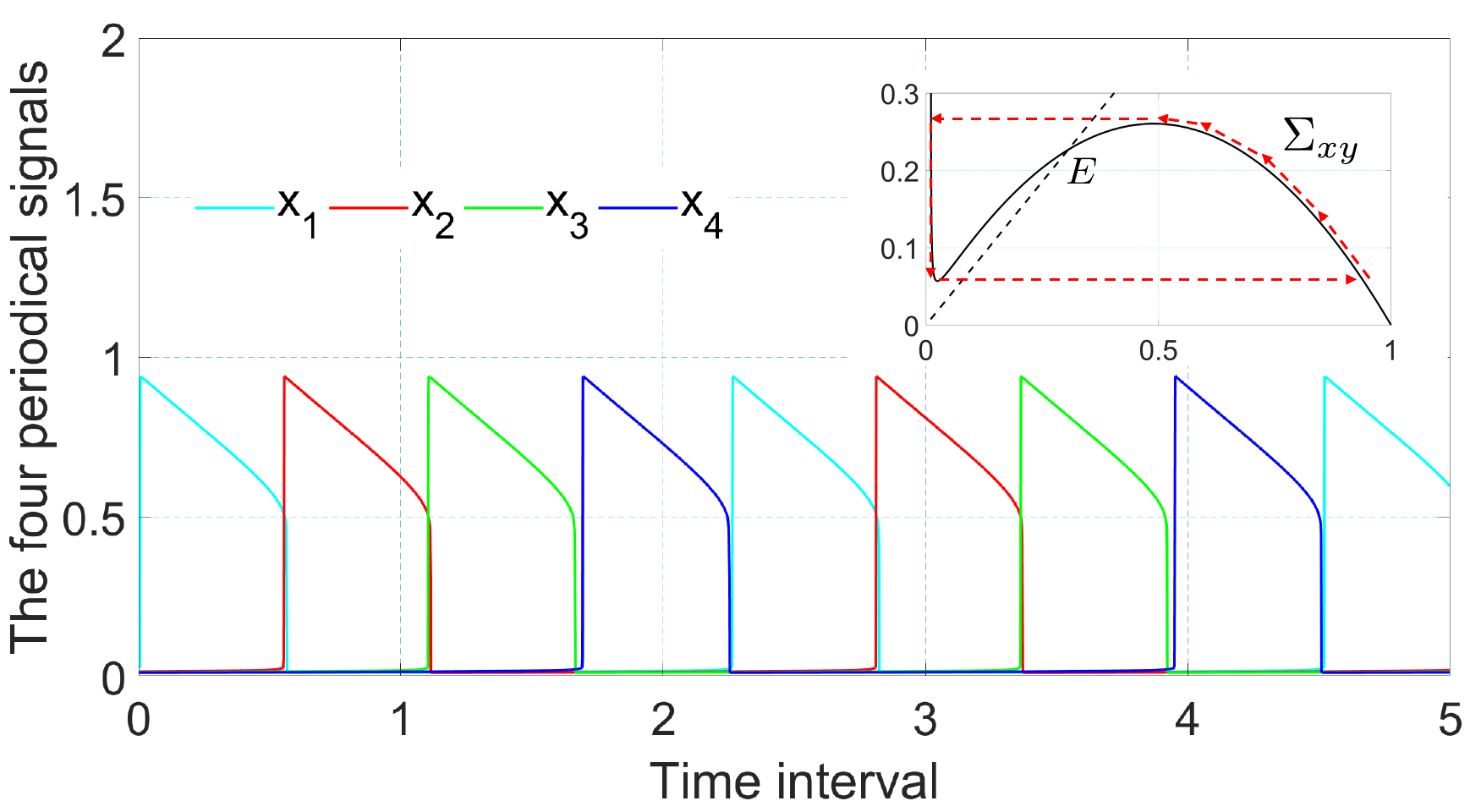}    % The printed column  
\caption{Design of four periodical signals under the Oregonator model given by \cite{Soloveichik2010}. The cubic-shaped critical manifold of $\Sigma_{xy}$ is displayed in the upper right corner, with the unstable equilibrium point $E$ and the oscillation orbit in red dashed lines.}  % width is 8.4 cm.
\label{fig8}                                 % Size the figures 
\end{center}                                 % accordingly.
\end{figure}
\section{Conclusion}\label{sec:6}
In this paper we show a universal scheme for arbitrary multi-module regulation and emphasize that the low amplitude of the oscillatory signals cannot be strictly $0$ (in fact the low amplitude of the biochemical oscillators to be $0$ is unimaginable). We then analyze the corresponding error induced by the oscillatory signals and provide the approach to control the error based on its structure. We also discuss the termination strategy and show the biochemical application of the whole design scheme.

In our future work, we will try to design suitable controller to replace the truncated subtraction module $\Sigma_{uv}$ in \eqref{detailed odes} in order to ``pull" the low amplitude of arbitrary relaxation oscillator $x$ like the one in \eqref{eq:ex5.1} to $0$, and combine more practical biochemical mechanisms and motifs with our oscillator model so that our design scheme could be capable of dealing with more specific biochemical systems and have broader application prospects.

\begin{ack}                               % Place acknowledgements
This work was funded by the National Nature Science Foundation of China under Grant No. 12320101001 and 12071428.  % here.
\end{ack}

\section*{Appendix: Proofs}
\textbf{Proof of Theorem \ref{thm1}.} We can directly integrate the formula as
\begin{align*}
    R(b,\alpha,\beta)&=\frac{-\frac{\sqrt{\alpha}}{2}Aln\frac{\frac{1}{36}b^2+\alpha}{\alpha}+F-\sqrt{\alpha}Cln\frac{\frac{1}{3}b+\beta}{\frac{2}{3}b+\beta}}{\frac{\sqrt{\alpha}}{2}Aln\frac{\frac{1}{9}b^2+\alpha}{\frac{1}{4}b^2+\alpha}+G-\sqrt{\alpha}Cln\frac{\frac{1}{6}b+\beta}{\beta}}, \\
    with \ A&=C=-\frac{\frac{1}{6}b+\beta}{\frac{1}{4}b^2+b\beta+\beta^2+\alpha}, \\
    B&=-\frac{(\frac{1}{3}b+\beta)(\frac{1}{6}b+\beta)}{\frac{1}{4}b^2+b\beta+\beta^2+\alpha}+1, \\
    F&=(\frac{1}{6}Ab+B)\arctan{\frac{\frac{1}{6}b}{\sqrt{\alpha}}}, \\
    G&=(\frac{1}{6}Ab+B)(\arctan{\frac{\frac{1}{3}b}{\sqrt{\alpha}}}-\arctan{\frac{\frac{1}{2}b}{\sqrt{\alpha}}}).\\
\end{align*}
Note that 
\begin{align*}
    &\lim_{\alpha,\beta \to 0}A=\lim_{\alpha,\beta \to 0}C=-\frac{2}{3b},\lim_{\alpha,\beta \to 0}B=\frac{7}{9}, \lim_{\alpha,\beta \to 0}F=\frac{\pi}{3}, \\
     &\lim_{\alpha,\beta \to 0}\sqrt{\alpha}Cln\frac{\frac{1}{3}b+\beta}{\frac{2}{3}b+\beta}= \lim_{\alpha \to 0} \frac{\sqrt{\alpha}}{2}Aln\frac{\frac{1}{9}b^2+\alpha}{\frac{1}{4}b^2+\alpha}=0 , \\
    &\lim_{\alpha \to 0}\frac{\sqrt{\alpha}}{2}Aln\frac{\frac{1}{36}b^2+\alpha}{\alpha}=\lim_{\beta \to 0}\sqrt{\beta}Cln\frac{\frac{1}{6}b+\beta}{\beta}=0, 
\end{align*}
$\lim_{\alpha,\beta \to 0}G=0$. As long as $\lim_{\alpha,\beta \to 0}\frac{\beta}{\alpha} \ne 0$, we have 
\begin{equation*}
\lim_{\alpha,\beta \to 0}R(b,\alpha ,\beta)=\infty, \ \forall b>0\ , 
\end{equation*}  
which reaches the conclusion.   $\hfill\blacksquare$

\textbf{Proof of Lemma \ref{lemma:P(t)}.} Since $h_{i,j}(\boldsymbol{s})$ refers to the kinetics of species $S_i$ in the module $\mathcal{M}_j$ in a polynomial form, a constant $H_0 >0 $ can be found such that $|H_{i}(\boldsymbol{s})| \leq H_0$ for bounded values of $\boldsymbol{s}$. It follows that $-\frac{H_0T}{m} \leq \int_0^tH_i(\boldsymbol{s})dr \leq \frac{H_0T}{m}$. Since $\boldsymbol{\omega}=\boldsymbol{\omega}_0+\boldsymbol{P}(t) \cdot \epsilon$, one can get $\frac{s_i-s_{0,i}}{\epsilon}=\boldsymbol{P}_i(t)$ for $\forall i =1, \dots, l_j$. Also the polynomial $h_{i,j}(\boldsymbol{s})-h_{i,j}(\boldsymbol{s}_0)$ can be rewritten in the form $\hat{h}(s_1-s_{0,1},\dots, s_{l_j}-s_{0,l_j}, \dots, s_n-s_{0,n})$, we could further reach the equation as follows:
     \begin{equation*}
         \frac{h_{i,j}(\boldsymbol{s})-h_{i,j}(\boldsymbol{s}_0)}{\epsilon} = \sum_{k=1}^{l_j} \Phi_{i,k} \boldsymbol{P}_k(t) +  \sum_{q=l_j+1}^{n} \Psi_{i,q} \boldsymbol{Q}_q(t)
     \end{equation*}
where both $\Phi_{i,k}$ and $\Psi_{i,q}$ consist of terms from $\boldsymbol{s}$ and $\boldsymbol{s}_0$ and can be regarded as the result of linearized reduction. Then we get
\begin{align*}
    &\int_{t_1}^t\frac{h_{i,j}(\boldsymbol{s})-h_{i,j}(\boldsymbol{s}_0)}{\epsilon}  v^j_j dr  \\
    =&(\sum_{k=1}^{l_j} \Phi_{i,k} \boldsymbol{P}_k(t_0) +  \sum_{q=l_j+1}^{n} \Psi_{i,q} \boldsymbol{Q}_q(t_0)) \cdot v^j_j(t_0) \cdot (t-t_1)
\end{align*}
with $t_1 \leq t_0 \leq t$. Because $\boldsymbol{s}$ and $\boldsymbol{s}_0$ are bounded and $v^j_j \leq x_h-p $, we can scale $\Phi_{i,k}$ and $\Psi_{i,k}$ to achieve $\hat{\Phi}_{i,k}$, $\tilde{\Phi}_{i,k}$, $\hat{\Psi}_{i,q}$ and $\tilde{\Psi}_{i,q}$ such that
    \begin{align*}
        &\int_{t_1}^t\frac{h_{i,j}(\boldsymbol{s})-h_{i,j}(\boldsymbol{s}_0)}{\epsilon}  v^j_j dr \\
       \geq& \rho \cdot (\sum_{k=1}^{l_j} \hat{\Phi}_{i,k} \cdot \boldsymbol{P}_k(t_2)+\sum_{q=l_j+1}^{n} \hat{\Psi}_{i,q} \cdot \boldsymbol{Q}_q(t_2)), \\
    &\int_{t_1}^t\frac{h_{i,j}(\boldsymbol{s})-h_{i,j}(\boldsymbol{s}_0)}{\epsilon}  v^j_j dr  \\
    \leq& \rho \cdot (\sum_{k=1}^{l_j} \tilde{\Phi}_{i,k} \cdot \boldsymbol{P}_k(t_2)+ \sum_{q=l_j+1}^{n} \tilde{\Psi}_{i,q} \cdot \boldsymbol{Q}_q(t_2).
    \end{align*}
Then it comes that
\begin{align*}
    \boldsymbol{P}(t_2) &\geq -\frac{H_0T}{m} \cdot \mathrm {1}_{l_j} + \rho \cdot (\hat{\Phi} \cdot \boldsymbol{P}(t_2) + \hat{\Psi} \cdot \boldsymbol{Q}(t_2))\  \\
    &\geq  -\frac{H_0T}{m} \cdot \mathrm {1}_{l_j} + \rho \hat{\Phi} \boldsymbol{P}(t_2) - \rho \frac{H_0T}{m} \hat{\Psi} \cdot \mathrm {1}_{n-l_j}\ , \\
     \boldsymbol{P}(t_2) &\leq \frac{H_0T}{m} \cdot \mathrm {1}_{l_j} + \rho \cdot (\tilde{\Phi} \cdot \boldsymbol{P}(t_2) + \tilde{\Psi} \cdot \boldsymbol{Q}(t_2)) \\
     &\leq  \frac{H_0T}{m} \cdot \mathrm {1}_{l_j} + \rho \hat{\Phi} \boldsymbol{P}(t_2) + \rho \frac{H_0T}{m} \hat{\Psi} \cdot \mathrm {1}_{n-l_j}\ ,
\end{align*}
and finally the inequalities emerge.     $\hfill\blacksquare$

\textbf{{Proof of Theorem \ref{thm:error}.}} Based on the boundary of $H_i$, it is obvious that $Q(t_2)$ is bounded, and we can just describe as 
   \begin{equation}
       -\frac{H_0T}{m} \cdot \mathrm {1}_{n-l_j} \leq \boldsymbol{Q}(t_2) \leq \frac{H_0T}{m} \cdot \mathrm {1}_{n-l_j}.
   \end{equation}
   In the first case $l_j=1$ means $\boldsymbol{P}(t)$ is actually a scalar, \eqref{P(t)} turns into one equation and one can easily get 
   \begin{equation}
      \rho \phi P(t_2) - \frac{H_0T}{m} (1+\rho \psi) \leq P(t_2) \leq \rho \phi P(t_2) + \frac{H_0T}{m} (1+\rho \psi)
   \end{equation}
where scalars $\phi$ and $\psi$ are both decided by the structure of $h_{1,j}(\boldsymbol{s})$ and separately, $\phi \cdot \frac{\partial h_{1,j}(\boldsymbol{s})}{\partial s_1}  \geq 0$, $\psi$ is the positive upper boundary of a specific polynomial function of $\boldsymbol{Q}(t_2)$. Since $\frac{\partial h_{1,j}(\boldsymbol{s})}{\partial s_1}  \leq 0$, we have $\phi \leq 0$ and finally 
\begin{equation}\label{xi}
     |P(t_2)| \leq \frac{H_0T}{m}(1+\rho \psi)/(1-\rho \phi) \ .
\end{equation}
    In the second case, similar to the proof of Lemma \ref{lemma:P(t)}, \eqref{P(t)} can be reduced into
    \begin{align*}
         \boldsymbol{P}(t_2) &\geq \rho \boldsymbol{H}_1 \cdot \boldsymbol{P}(t_2) - \frac{H_0T}{m}(\mathrm {1}_{l_j} + \rho \boldsymbol{H}_2 \cdot \mathrm {1}_{n-l_j}) \ , \\
       \boldsymbol{P}(t_2) &\leq \rho \boldsymbol{H}_1 \cdot \boldsymbol{P}(t_2) + \frac{H_0T}{m}(\mathrm {1}_{l_j} + \rho \boldsymbol{H}_2 \cdot \mathrm {1}_{n-l_j}) \ ,
    \end{align*}
   with the submatrix $\boldsymbol{H}_2$ determined by the rows from $1$ to $l_j$ and columns from $l_j+1$ to $n$. As long as $I_{l_j}-\rho \cdot \boldsymbol{H}$ is positive definite, $(I_{l_j}-\rho \cdot \boldsymbol{H})^{-1}$ exists and is also positive definite, then it follows
   \begin{align*}
       \boldsymbol{P}(t_2) &\geq -\frac{H_0T}{m} (I_{l_j}-\rho \boldsymbol{H}_1)^{-1} ( \mathrm {1}_{l_j}+ \rho \boldsymbol{H}_2 \cdot \mathrm {1}_{n-l_j}) \ , \\
       \boldsymbol{P}(t_2) &\leq \frac{H_0T}{m} (I_{l_j}-\rho \boldsymbol{H}_1)^{-1} ( \mathrm {1}_{l_j}+ \rho \boldsymbol{H}_2 \cdot \mathrm {1}_{n-l_j}) \ . 
   \end{align*}
  In the third case, benefit directly from Lemma \ref{lemma:P(t)}, we have
  \begin{align*}
       \boldsymbol{P}(t_2) &\geq -\frac{H_0T}{m} (I_{l_j}-\rho \hat{\Phi})^{-1} ( \mathrm {1}_{l_j}+ \rho \hat{\Psi} \cdot \mathrm {1}_{n-l_j}) \ , \\
       \boldsymbol{P}(t_2) &\leq \frac{H_0T}{m} (I_{l_j}-\rho \tilde{\Phi})^{-1} ( \mathrm {1}_{l_j}+ \rho \tilde{\Psi} \cdot \mathrm {1}_{n-l_j}) \ ,
  \end{align*}
  with $\hat{\Phi}$, $\hat{\Psi}$, $\tilde{\Phi}$ and $\tilde{\Psi}$ come from the process of linearized reduction.  $\hfill\blacksquare$

\begin{wrapfigure}{l}{0mm}
\includegraphics[width=0.95in,height=1.25in,clip,keepaspectratio]{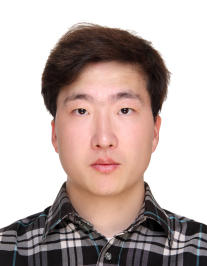}
\end{wrapfigure}
Xiaopeng Shi received the B.Sc.degree in Applied Mathematics from Wuhan University, China, in 2020, and has studied as a Ph.D. candidate in Operational Research and Cybernetics of Zhejiang University from September 2020 till now.
His research interests are in the areas of chemical reaction network theory, oscillation in dynamical system and control theory.

\begin{wrapfigure}{l}{0mm}
     \includegraphics[width=1in,height=1.25in,clip,keepaspectratio]{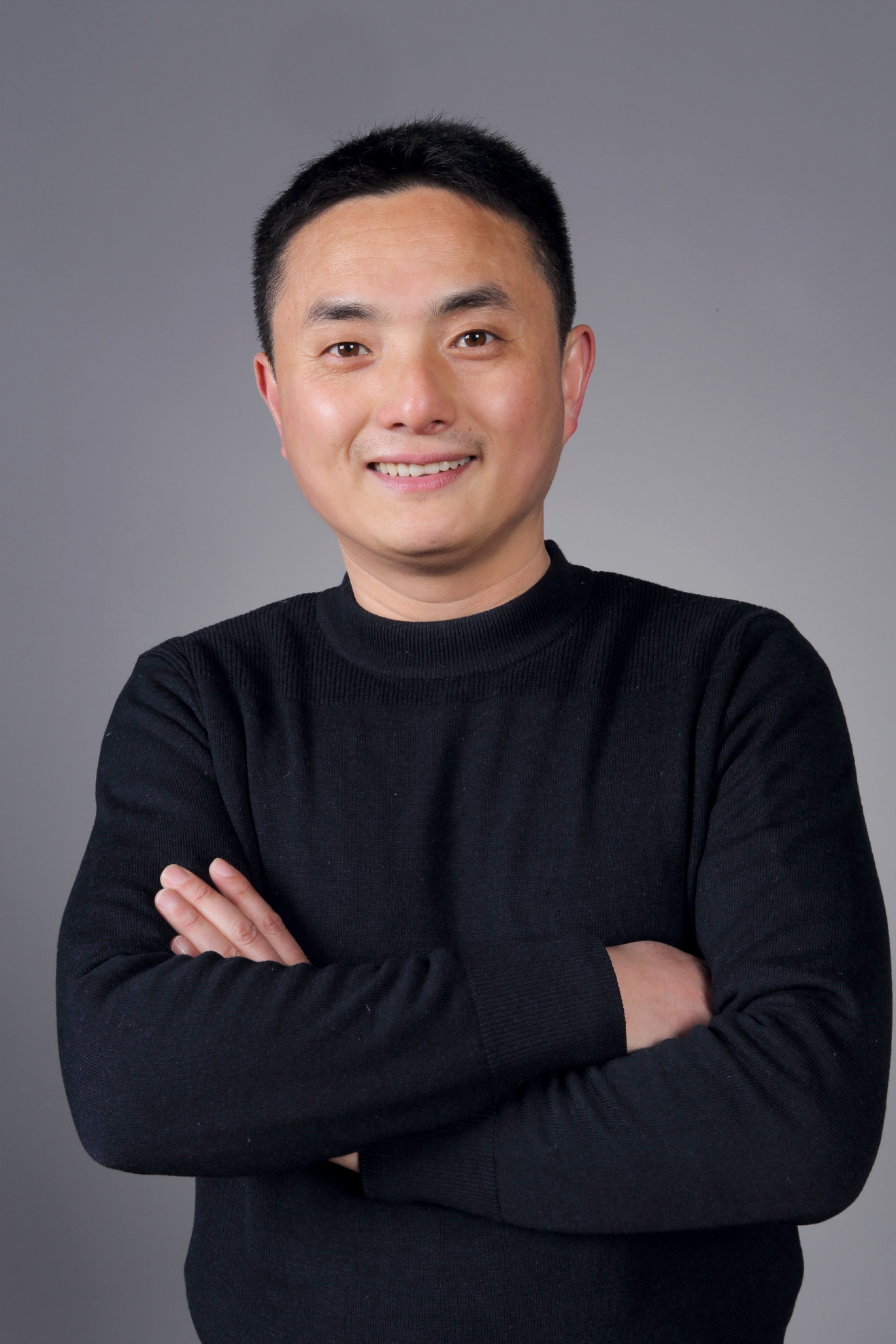}
\end{wrapfigure}
 Chuanhou Gao received the B.Sc. degrees in Chemical Engineering from Zhejiang University of Technology, China, in 1998, and the Ph.D. degrees in Operational Research and Cybernetics from Zhejiang University, China, in 2004. From June 2004 until May 2006, he was a Postdoctor in the Department of Control Science and Engineering at Zhejiang University. Since June 2006, he has joined the Department of Mathematics at Zhejiang University, where he is currently a Professor. He was a visiting scholar at Carnegie Mellon University from Oct. 2011 to Oct. 2012. 
 His research interests are in the areas of data-driven modeling, control and optimization, chemical reaction network theory and thermodynamic process control. He is an associate editor of IEEE Transactions on Automatic Control and of International Journal of Adaptive Control and Signal Processing.

\begin{wrapfigure}{l}{0mm}
	\includegraphics[width=1in,height=1.25in,clip,keepaspectratio]{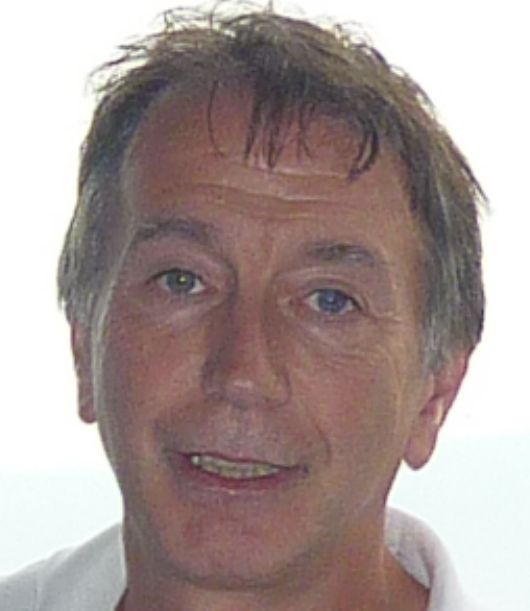}
\end{wrapfigure}
Denis Dochain received his degree in Electrical engineering in 1982 from the Université Catholique de Louvain, Belgium. He completed his Ph.D. thesis and a ``th\`{e}se d'agr\'{e}gation de l'enseignement sup\'{e}rieur" in 1986 and 1994, respectively, also at the Université Catholique de Louvain, Belgium. He has been CNRS associate researcher at the LAAS (Toulouse, France) in 1989, and Professor at the Ecole Polytechnique de Montr\'{e}al, Canada in 1987-88 and 1990-92. He has been with the FNRS (Fonds National de la Recherche Scientifique, National Fund for Scientific Research), Belgium since 1990. Since September 1999, he is Professor at the ICTEAM (Institute), Universit\'{e} Catholique de Louvain, Belgium, and Honorary Research Director of the FNRS. He has been invited professor at Queen's University, Kingston, Canada between 2002 and 2004. He is full professor at the UCL since 2005. He is the Editor-in-Chief of the Journal of Process Control, senior editor of the IEEE Transactions of Automatic Control and associate editor of Automatica. He is active in IFAC since 1999 (Council member, Technical Board member, Publication Committee member and chair, TC and CC chair). He received the IFAC outstanding service award in 2008 and is an IFAC fellow since 2010. He received the title of Doctor Honoris Causa from the INP Grenoble on December 13, 2020.
His main research interests are in the field of nonlinear systems, thermodynamics based control, parameter and state estimation, adaptive extremum seeking control and distributed parameter systems, with application to microbial ecology, environmental, biological and chemical systems, and electrical and mechanical systems. He is the (co-)author of 5 books, more than 160 papers in refereed journals and 260 international conference papers.

\begin{thebibliography}{99}     % Otherwise use the  
                                 % thebibliography environment.
                                 % Insert the full references here.
                                 % See a recent issue of Automatica 
                                 % for the style.
\bibitem[Angeli et al., 2021]{Angeli2021}Angeli, D., Al-Radhawi, M. A., \& Sontag, E. D. (2021). A robust Lyapunov criterion for nonoscillatory behaviors in biological interaction networks. {\it IEEE Transactions on Automatic Control}, 67(7), 3305-3320.
\bibitem[Banaji, 2018]{Banaji2018}Banaji, M. (2018). Inheritance of oscillation in chemical reaction networks. {\it Applied Mathematics and Computation}, 325, 191-209.
\bibitem[Blount et al., 2017]{Blount2017}Blount, D., Banda, P., Teuscher, C., \& Stefanovic, D. (2017). Feedforward chemical neural network: An in silico chemical system that learns xor. {\it Artificial life}, 23(3), 295-317.
\bibitem[Del et al., 2016]{Del2016} Del Vecchio, D., Dy, A. J., \& Qian, Y. (2016). Control theory meets synthetic biology. {\it Journal of The Royal Society Interface}, 13(120), 20160380.
\bibitem[Fages et al., 2017]{Fages2017}Fages, F., Le Guludec, G., Bournez, O., \& Pouly, A. (2017). Strong turing completeness of continuous chemical reaction networks and compilation of mixed analog-digital programs. In {\it Computational Methods in Systems Biology: 15th International Conference, CMSB 2017, Darmstadt, Germany, September 27–29, 2017, Proceedings 15} (pp. 108-127). Springer International Publishing.
\bibitem[Feinberg, 2019]{Feinberg2019}Feinberg, M. (2019). Foundations of chemical reaction network theory, Springer.
\bibitem[Field et al., 1972]{Field1972}Field, R. J., Koros, E., \& Noyes, R. M. (1972). Oscillations in chemical systems. II. Thorough analysis of temporal oscillation in the bromate-cerium-malonic acid system. {\it Journal of the American Chemical Society}, 94(25), 8649-8664.
\bibitem[McMillen et al., 2002]{McMillen2002}McMillen, D., Kopell, N., Hasty, J., \& Collins, J. J. (2002). Synchronizing genetic relaxation oscillators by intercell signaling. {\it Proceedings of the National Academy of Sciences}, 99(2), 679-684.
\bibitem[Montagne et al., 2011]{Montagne2011}Montagne, K., Plasson, R., Sakai, Y., Fujii, T., \& Rondelez, Y. (2011). Programming an in vitro DNA oscillator using a molecular networking strategy. {\it Molecular systems biology}, 7(1), 466.
\bibitem[Salehi et al., 2017]{Salehi2017}Salehi, S. A., Parhi, K. K., \& Riedel, M. D. (2017). Chemical reaction networks for computing polynomials. ACS synthetic biology, 6(1), 76-83.
\bibitem[Shi et al., 2022]{Shi2022}Shi, X., Gao, C., \& Dochain, D. (2022). Chemical relaxation oscillator designed to control molecular computation. {\it arxiv preprint arxiv}:2302.14226.
\bibitem[Shi et al., 2023]{Shi2023}Shi, X., Gao, C., \& Dochain, D. (2023). Accurate control to run and stop chemical reactions via relaxation oscillators. {\it IFAC-PapersOnLine}, 56(2), 10441-10446.
\bibitem[Soloveichik et al., 2010]{Soloveichik2010}Soloveichik, D., Seelig, G., \& Winfree, E. (2010). DNA as a universal substrate for chemical kinetics. {\it Proceedings of the National Academy of Sciences}, 107(12), 5393-5398.
\bibitem[Vasic et al., 2020]{Vasic2020} Vasi\^{c}, M., Soloveichik, D., \& Khurshid, S. (2020). CRN++: Molecular programming language. {\it Natural Computing}, 19, 391-407.
%\bibitem[VasiC et al., 2022]{VasiC2022} Vasić, M., Chalk, C., Luchsinger, A., Khurshid, S., \& Soloveichik, D. (2022). Programming and training rate-independent chemical reaction networks. {\it Proceedings of the National Academy of Sciences}, 119(24), e2111552119.













\end{thebibliography}
\end{document}